\definecolor{dark blue}{rgb}{0.121,0.47,0.705}
\definecolor{dark red}{rgb}{0.89,0.102,0.109}
\definecolor{dark green}{rgb}{0.2,0.627,0.172}
\definecolor{dark orange}{rgb}{1,0.498,0}
\definecolor{dark purple}{rgb}{0.415,0.239,0.603}
\definecolor{dark pink}{rgb}{0.969, 0.506, 0.749}
\newcommand{\bl}{\color{dark blue}}
\newcommand{\re}{\color{dark red}}
\newcommand{\gr}{\color{dark green}}
\newcommand{\og}{\color{dark orange}}
\newcommand{\pu}{\color{dark purple}}
\newcommand{\pk}{\color{dark pink}}
\newcommand{\gy}{\color{darkgray}}
\definecolor{defblue}{rgb}{0.08235294118,0.3098039216,0.537254902}
\let\emph\relax
\DeclareTextFontCommand{\emph}{\color{defblue}\em}
\DeclarePairedDelimiter\set{\{}{\}}
\DeclarePairedDelimiter\abs{\lvert}{\rvert}
\DeclarePairedDelimiter\ceil{\lceil}{\rceil}
\DeclarePairedDelimiter\croc{\langle}{\rangle}
\def\Oh{\ensuremath{\mathcal{O}}}
\def\Sgiad{\ensuremath{\nwarrow}\xspace}
\def\Svert{\ensuremath{\uparrow}\xspace}
\def\Sdiag{\ensuremath{\nearrow}\xspace}
\def\Shori{\ensuremath{\rightarrow}\xspace}
\newcommand{\cT}{{\mathcal{T}}}
\newcommand{\gsquare}{$G_\square$\xspace}
\newcommand{\gsquares}{$G_\square$s\xspace}
\newcommand{\gshift}{$G_\leftrightarrow$\xspace}
\newcommand{\gshifts}{$G_\leftrightarrow$s\xspace}
\newcommand{\pms}{\textsc{Planar Monotone 3-SAT}\xspace}
\newcommand{\pmrs}{\textsc{Planar Monotone Rectilinear 3-SAT}\xspace}
\crefname{corollaryCounter}{Corollary}{Corollaries}
\newtheorem{observation}[theorem]{Observation}
\newtheorem{corollary}[theorem]{Corollary}
\newcommand{\etal}{{et~al.}\ }
\begin{document}

\doi{10.7155/jgaa.00xxx}
\Issue{0}{0}{0}{0}{0}
\HeadingAuthor{J.\ Klawitter and J.\ Zink}
\HeadingTitle{Upward Planar Drawings with Three and More Slopes}
\title{Upward Planar Drawings with Three and More Slopes}
\Ack{We thank the anonymous reviewers for their helpful comments.}
\authorOrcid[first]{Jonathan~Klawitter}{jo.\{lastname\}@gmail.com}{0000-0001-8917-5269}
\authorOrcid[first]{Johannes~Zink}{\{lastname\}@informatik.uni-wuerzburg.de}{0000-0002-7398-718X}
\affiliation[first]{Universität Würzburg, Würzburg, Germany}

\maketitle

\pdfbookmark[1]{Abstract}{Abstract}
\begin{abstract}
The \emph{slope number} of a graph $G$ is the smallest number of slopes needed 
for the segments representing the edges in any straight-line drawing of $G$.
It serves as a measure of the visual complexity of a graph drawing.
Several bounds on the slope number for particular graph classes have been established, 
both in the planar and the non-planar setting.
Moreover, the slope number can also be defined for directed graphs and upward planar drawings.

We study upward planar straight-line drawings that use only a constant number of slopes.
In particular, for a fixed number~$k$ of slopes,
we are interested in whether a given directed graph $G$ 
with maximum in- and outdegree at most $k$
admits an upward planar $k$-slope drawing.
We investigate this question both in the fixed and the
variable embedding scenario.
We show that this problem is in general NP-hard 
to decide for outerplanar graphs ($k = 3$) and planar graphs ($k \ge 3$).
On the positive side, we can decide whether a given cactus graph
admits an upward planar $k$-slope drawing and, in the affirmative, construct such a drawing 
in FPT time with parameter $k$.
Furthermore, we can determine the minimum number of slopes required for a given tree in linear time 
and compute the corresponding drawing efficiently.
\end{abstract}

\section{Introduction}
One of the main goals in graph drawing is to generate clear drawings.
Depending on the particular use case, we may request that a graph drawing has specific properties
and use quality measures for evaluation.
Classic examples are planarity or the number of crossings, the required drawing area, 
neighborhood preservation, or the stress of a drawing~\cite{DETT99}.
When we visualize directed graphs (digraphs for short) that model hierarchical relations,
we may represent edge directions explicitly by letting each edge point upward.
Combining this with planarity, we get upward planar drawings.

Schulz~\cite{Sch15} proposed the \emph{visual complexity} as a quality measure,
which is determined by the number of different geometric primitives used for the drawing.
Experiments have shown that people tend to prefer a low visual complexity~\cite{KMS18}.
Therefore, we may want to use few segments or arcs, or -- as in our case --
only few slopes for the edges.
Using a limited number of slopes is common for schematic drawings:
Various diagram types are orthogonal drawings and use only two different slopes.
With three or four slopes we get hexalinear and octilinear drawings, respectively,
which find applications in metro maps~\cite{NW11}, visualization of chemical cyclic compounds~\cite{FGR04}, and VLSI~\cite{MHS05}.

In this paper, we are interested in a combination of these requirements, namely,
we study upward planar straight-line drawings that use a fixed number of slopes;
\cref{fig:baseExample} shows an example with three slopes.
In particular, given a digraph $G$ and a fixed number~$k$ of slopes,
we ask whether $G$ admits an upward planar straight-line drawing with $k$ slopes.
Also, we have to distinguish whether $G$ comes with an embedding in the plane or not.
In the following, we define above drawing concepts formally, discuss related work,
and introduce the precise problem we consider.

\paragraph{Upward Planarity.}
An \emph{upward planar drawing} of a digraph $G$ is a planar drawing of $G$
where every edge $uv$ is drawn as a monotonic upward curve from $u$ to $v$. 
We call $G$ \emph{upward planar} if it admits an upward planar drawing and
\emph{upward plane} if it is equipped with an upward planar embedding. 
Note that an upward planar embedding, given by the edge order around each vertex,
is necessarily \emph{bimodal}, 
that is, each cyclic sequence of incident edges
can be split into two contiguous subsequences of incoming edges
and outgoing edges~\cite{DETT99}. 
Di Battista and Tamassia~\cite{DT88} have shown that if a digraph is upward planar, 
then it also admits an upward planar straight-line drawing.

When we are given a digraph $G$, we have to distinguish whether $G$ comes with an upward planar embedding or not.
In the latter case, we are required to find an upward planar embedding of $G$
(if one exists) in order to draw $G$ upward planar.
However, Garg and Tamassia~\cite{GT01} have shown 
that upward planarity testing is an NP-complete problem for general digraphs.
This even holds for digraphs with a bounded maximum degree $\Delta$, where $\Delta \geq 2$~\cite{KM21}.
On the positive side, there exist several fixed-parameter tractable (FPT) algorithms for general digraphs~\cite{Cha04,HL06,DGL10,CDFGRS22}
and polynomial-time algorithms for special graph classes such as single-source digraphs~\cite{BDMT98}, 
outerplanar digraphs~\cite{Pap94}, series-parallel digraphs~\cite{DGL10}, and triconnected digraphs~\cite{BDLM94}. 
If the embedding of a digraph is given, upward planarity can be tested in polynomial time~\cite{BDLM94}.

\paragraph{$k$-Slope Drawings.}
A \emph{$k$-slope drawing} of a (not necessarily directed) graph $G$ is a straight-line drawing of $G$
where every edge is drawn with one of at most $k$ different slopes; see \cref{fig:baseExample}.
Wade and Chu~\cite{WC94} defined the \emph{slope number} of~$G$ as the smallest $k$ 
such that $G$ admits a $k$-slope drawing. 
If only (upward) planar drawings are allowed, the number is called the \emph{(upward) planar slope number} of $G$.
Both the slope number and the planar slope have been studied extensively in the past,
mostly with the goal of finding upper bounds for particular graph classes~\cite{WC94,PP06,DSW07,DESW07,MS09,MP11,KPP13,LLMN13,JJKLTV13,DLM15,KMW14,DLM18,BKM19}.
We give a few examples. 
While Pach and P{\'a}lv{\"o}lgyi~\cite{PP06} have shown
that graphs with bounded degree $\Delta$ ($\Delta \geq 5$) can have arbitrarily large slope number,
Keszegh, Pach, P{\'a}lv{\"o}lgyi, and T{\'o}th~\cite{KPPT08} have shown that five slopes suffice for cubic graphs.
Later, Mukkamala and Szegedy~\cite{MS09} improved this to four slopes.
Restricted to planar drawings, 
Dujmović, Eppstein, Suderman and Wood~\cite{DESW07} showed, among other results,
that all plane cubic graphs admit planar 3-slope drawings.
In general, determining the planar slope number of a graph 
is hard in the existential theory of the reals ($\exists \mathbb{R}$), which was shown by Hoffmann~\cite{Hof17}.

Recently, the interest in upward planar drawings on few slopes has grown.
Bekos, Di Giacomo, Didimo, Liotta, and Montecchiani~\cite{BDDLM18} 
showed that every so-called bitonic $st$-graph $G$ with maximum degree $\Delta$
admits an upward planar 1-bend $\Delta$-slope drawing.
Complementarily, Di Giacomo, Liotta, and Montecchiani~\cite{DLM20} proved the analogous result for series-parallel digraphs.
We want to point out that both Bekos \etal and Di Giacomo \etal 
use horizontally drawn segments facing in both directions.
Brückner, Krisam, and Mchedlidze~\cite{BKM19} studied level-planar drawings with a fixed slope set,
that is, upward planar drawings where each vertex is drawn on a predefined integer y-coordinate (its level).
Older works include results by Czyzowicz, Pelc, and Rival~\cite{CPR90} and Czyzowicz~\cite{Czy91} on lattices
and several results for trees~\cite{CDP92,BBBHMU08,BM11,BM13}.

In a companion paper to this one,  
Klawitter and Mchedlidze~\cite{KM21} show that
it can be decided in linear time whether a given upward plane digraph admits an upward planar 2-slope drawing.
For the variable embedding scenario and two slopes,
they give a linear-time algorithm for single-source digraphs,
a quartic-time algorithm for series-parallel digraphs, and
an FPT algorithm for general digraphs.
Quapil~\cite{Qua21} recently also considered upward planar $k$-slope drawings.
His results include an extension of Hoffman's $\exists \mathbb{R}$-hardness 
from the planar slope number to the upward planar slope number, 
drawings of series-parallel graphs for $k = 3$, 
and area requirements of upward planar $k$-slope drawings for ordered trees, cacti, and series-parallel graphs. 

\begin{figure}[tb]
	\centering
	\begin{subfigure}[t]{0.20 \linewidth}
		\centering
		\includegraphics[page=3]{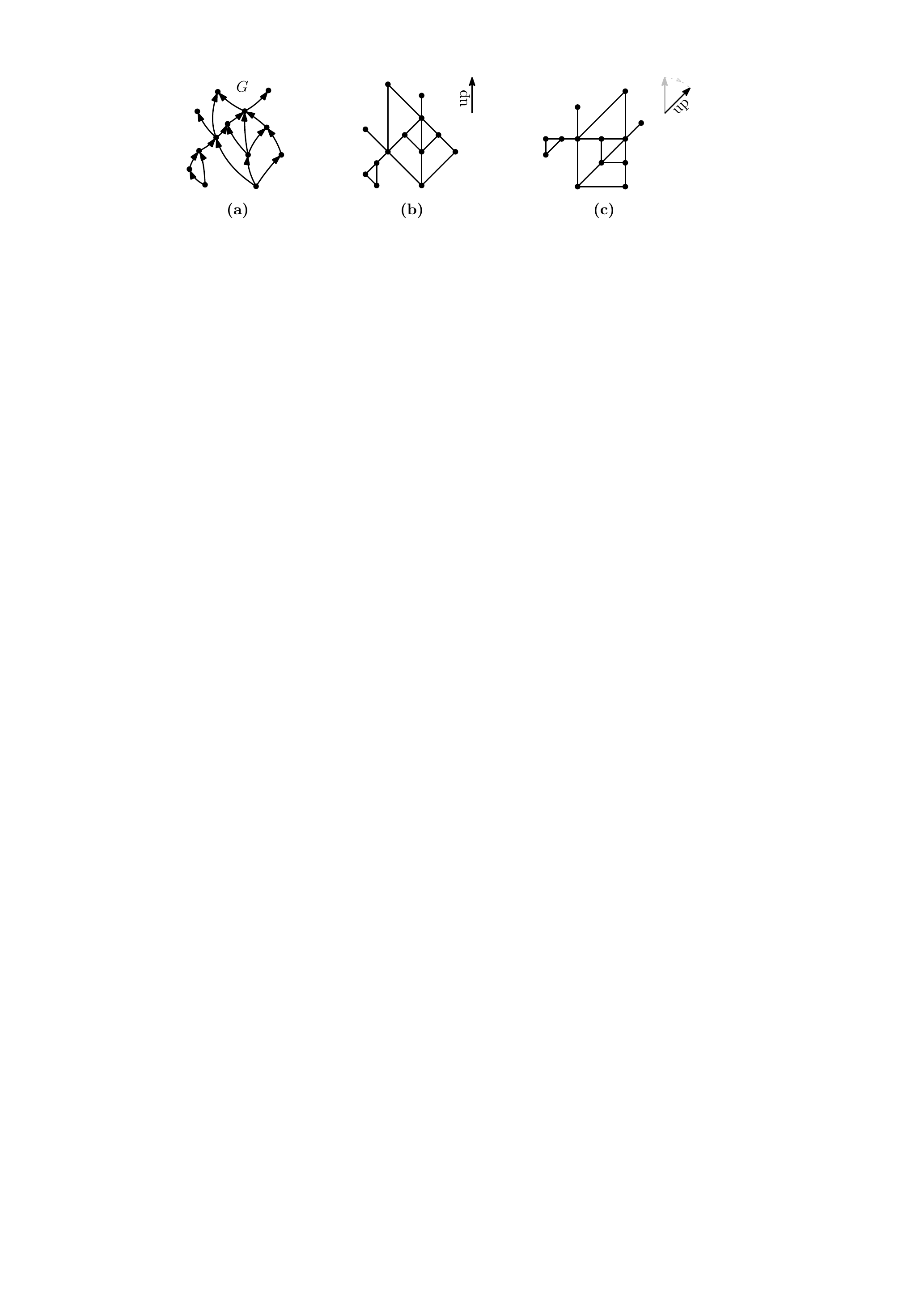}
		\caption{Digraph $G$.}
		\label{fig:baseExample:G}
	\end{subfigure}
	\begin{subfigure}[t]{0.25 \linewidth}
		\centering
		\includegraphics[page=6]{baseExample}
		\caption{General setting.}
		\label{fig:baseExample:general}
	\end{subfigure} 	 
	\begin{subfigure}[t]{0.25 \linewidth}
		\centering
		\includegraphics[page=5]{baseExample}
		\caption{Uniform angles setting.}
		\label{fig:baseExample:uniform}
	\end{subfigure}
	\begin{subfigure}[t]{0.25 \linewidth}
		\centering
		\includegraphics[page=4]{baseExample}
		\caption{Regular grid setting.}
		\label{fig:baseExample:grid}
	\end{subfigure}
	\caption{The upward planar straight-line line drawings of the digraph $G$ realize different slope sets.
		In the remained of this paper, edges are drawn upward while arrow heads are omitted.}
	\label{fig:baseExample}
\end{figure}

\paragraph{Contribution.}
In this paper, we aim to extend the results by Klawitter and Mchedlidze~\cite{KM21} to the case of three and more slopes.
We study the problem of deciding whether a digraph admits an upward planar $k$-slope drawing for any given $k$
with a special focus on the case $k = 3$.
Broadly speaking, we show how the problem becomes harder with more complex graph classes
and hence consider the following classes.
First, we use the classes of ordered and unordered directed trees as an easy introduction to the problem (\cref{sec:trees}). 
In particular, the upward planar slope numbers of these trees is easy to determine  
and upward planar $k$-slope drawings can be constructed efficiently.
Second, we show that for a given cactus digraph $G$ (defined below)
we can construct an upward planar $k$-slope drawing in polynomial time (\cref{sec:cactus}).
To this end, we devise a dynamic program on the block-cut tree of $G$ 
and utilize a simple polygon drawing algorithm by Culberson and Rawlins~\cite{CR85}.
Third, we show that it is NP-hard to decide whether
a given upward outerplanar
digraph admits an upward planar 3-slope drawing (\cref{sec:np}).
We extend the NP-hardness to $k > 3$ but restrict the graph class to upward planar
(except for $k = 4$ if no embedding is given).
Prior to that, we introduce further notation and terminology in the next section. 

Our findings and some known results are summarized in \cref{tab:overview}.

\begin{table}[t]
	\centering
	\begin{tabular}{@{}l | cccc | cccc@{}}
		Graph class & \multicolumn{4}{c|}{fixed embedding} & \multicolumn{4}{c}{variable embedding} \\
		\multicolumn{1}{r|}{$k =$}  & 2 & 3 & 4 & $\ge 5$ 				  & 2 & 3 & 4 & $\ge 5$ \\
		\midrule
		Tree        & P \cite{KM21} & P C\ref{clm:tree:fixed} & P C\ref{clm:tree:fixed} & P C\ref{clm:tree:fixed} 		& P T\ref{clm:tree:variable} & P T\ref{clm:tree:variable} & P T\ref{clm:tree:variable} & P T\ref{clm:tree:variable}  \\
		Cactus      & P \cite{KM21} & P T\ref{clm:cacti:fixed} & P T\ref{clm:cacti:fixed} & FPT T\ref{clm:cacti:fixed} 		  & P T\ref{clm:cacti:fixed} & P T\ref{clm:cacti:fixed} & P T\ref{clm:cacti:fixed} & FPT T\ref{clm:cacti:fixed} \\
		Outerplanar	& P \cite{KM21} & NPh T\ref{clm:nphard3slopesFixedEmb} & ? & ? 		  		& FPT \cite{KM21} & NPh C\ref{clm:nphard3slopesVariableEmb} & ? & ?  \\
		Planar	& P \cite{KM21} & NPh T\ref{clm:nphard3slopesFixedEmb} & NPh C\ref{clm:nphardkslopesFixedEmb} & NPh C\ref{clm:nphardkslopesFixedEmb}		  & FPT \cite{KM21} & NPh C\ref{clm:nphard3slopesVariableEmb} & ? & NPh T\ref{clm:nphardkslopesVariableEmb} \\
	\end{tabular}
	\caption{Complexity of constructing an upward planar $k$-slope drawing
		of a given digraph of the specified graph classes.
		P means polynomial-time solvable,
		FPT means fixed-parameter tractable, and
		NPh means NP-hard.
		Our findings contain a reference to a theorem (T)
		or a corollary (C).
	}
	\label{tab:overview}
\end{table}

\section{Notation and Terminology}
\label{sec:prelim}
All graphs in this paper are simple, that is,
they do not contain parallel edges or self-loops. 
Note that a given digraph $G$ may admit an upward planar $k$-slope drawing
only if it has maximum in- and out-degree at most $k$.
Otherwise, there would not be enough slopes to represent all edges entering or leaving a vertex
with too high in- or outdegree, respectively.
Hence, we assume from now on that every input digraph has maximum in- and out-degree at most $k$. 

Given a graph $G$, a planar embedding can be specified by the order of the edges around each vertex
as well as a designed outer face.
We speak of the \emph{fixed embedding scenario}, 
when besides~$G$, we are also given a planar embedding (or, here, an upward planar embedding) as input.
This embedding is then to be used also in the output drawing.
In the \emph{variable embedding scenario}, on the other hand,
we are not given an embedding and have to find one prior or parallel to constructing a drawing.

\paragraph{Graph Classes.}
A \emph{directed tree} is a digraph whose underlying graph is a tree,
that is, a connected graph that does not contain a cycle.
Note that directed trees are a superset of rooted trees,
that is, a directed tree can have multiple sources (vertices with indegree 0)
while a rooted tree has exactly one.
While for general graphs the terms fixed and variable embedding are common,
the following terms are used for trees:
an \emph{unordered} tree comes without planar embedding,
while for an \emph{ordered} tree, a planar embedding is specified.  

A \emph{cactus graph} is a planar graph where any two (simple) cycles share at most one vertex.
For digraphs, we assume that the underlying graph is a cactus graph.

An \emph{outerplanar graph} is a planar graph that admits an embedding where each vertex lies on the outer face.
A drawing of an outerplanar graph is usually expected to maintain this property.
We also assume this henceforth.
For the directed case,
we define an \emph{upward outerplanar digraph} as a digraph that admits
an upward planar drawing with each vertex on the outer face. 

Note that every directed tree is a cactus digraph
and that every cactus digraph is an outerplanar digraph. 
All of these three classes are subclasses of planar digraphs.

\paragraph{Slope Sets.}
When constructing an upward planar $k$-slope drawing,
we need a set of $k$ distinct slopes.
To obtain this slope set, we propose the following three settings, 
which are illustrated in \cref{fig:baseExample:grid,fig:baseExample:uniform,fig:baseExample:general}, respectively.
\begin{description}
\item[general setting] Any set of $k$ distinct slopes can be chosen.
\item[uniform angles setting] The slopes are distributed equally,
that is, clockwise from the x-axis they have angles in $\set{i \cdot \pi / k - \frac{\pi}{2k} \mid i \in \set{1, \dots, k}}$.
\item[regular grid setting] Any set of $k$ distinct slopes that connect points on the 2D grid
is allowed. This may include the horizontal slope and then
horizontally rightwards also counts as upwards.
We prefer to pick slopes that can be used by segments of roughly equal length; 
see for example \cref{fig:baseExample:grid} and \cref{fig:treeGrid}.
\end{description}

Both the uniform angles and the regular grid setting have their own advantages and disadvantages.
Uniform angles naturally lead to more balanced drawings
with more rotational symmetry, which we find more visually appealing.
Moreover, the drawings have a perfect angular resolution, which is also known as a quality measure of graph drawings~\cite{DETT99,DLM18}. 
The downside of the uniform angles setting is that we cannot always use grid points of the regular 2D grid.
For example, for $k = 3$, the first slope is $\tan (\pi/6) = 1/\sqrt{3}$,
which is an irrational number.
Henceforth, we assume for uniform angles a computational and representation model 
that can handle implicit coordinates or alternatively uses real numbers.
On the other hand,
while the regular grid setting naturally facilitates integer coordinates,
it may also yield less balanced angles between edges and irrational edge lengths. 
Since all of these settings have their natural justification,
we do not restrict our considerations to one of them.

Note that a 2-slope drawing can be sheared
such that one slope changes and the other remains the same; see \cref{fig:affinelyTransform3Slopes}ii;
we refer also to Klawitter and Mchedlidze~\cite{KM21}.
Moreover, observe that such a transformation only changes the length of segments with the changing slope,
but not the others.
The fact that this does not hold for three or more slopes 
introduces interesting new geometric aspects for these cases.
However, note that $k = 3$ is a special case
because no matter which three slopes we pick, they
can be affinely transformed to the slopes of the angles $\set{45^{\circ}, 90^{\circ}, 135^{\circ}} =  \set{\Sgiad, \Svert, \Sdiag}$,
as illustrated in \cref{fig:affinelyTransform3Slopes}.
Hence, when referring to $k = 3$, we may assume the use of this slope set.
(For illustrative purposes, we rotate drawings in \cref{sec:np} by $45^{\circ}$ clockwise and
then use the slope set $\set{\Svert, \Sdiag, \Shori}$.)
In some cases we number $k$ different slopes with the numbers 1 to $k$ in counterclockwise order.

\begin{figure}[t]
	\centering
	\includegraphics[page=1]{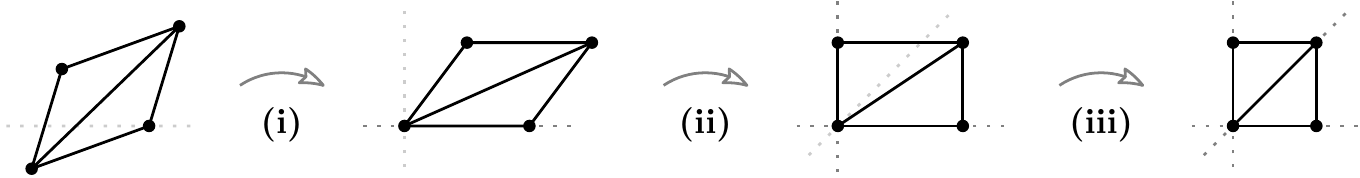}
	\caption{Given a drawing using a set of any three slopes, we can
		(i)~rotate,
		(ii)~shear, and
		(iii)~stretch
		it to a drawing with the slope set $\set{\nwarrow, \Svert, \Sdiag}$ (rotated here by $45^{\circ}$ clockwise for illustrative purposes to $\set{\Svert, \Sdiag, \Shori}$).}
	\label{fig:affinelyTransform3Slopes}
\end{figure} 

\paragraph{Slope Assignment.}
A \emph{$k$-slope assignment} of a digraph $G$ assigns each edge of $G$ one of $k$ slopes.
If~$G$ is upward plane, we call a $k$-slope assignment of $G$ \emph{consistent} 
if the assignment complies with the cyclic edge order around each vertex;
that is, for $k = 3$, if a vertex has three incoming edges,
they need to be assigned the slopes \Sdiag, \Svert, and \Sgiad in counterclockwise order. 
Clearly, if an upward plane embedding does not admit a consistent $k$-slope assignment,
then it also does not admit an upward planar $k$-slope drawing.

\section{Trees}
\label{sec:trees}
In this section, we consider upward planar $k$-slope drawings of directed trees. 
While our trees are in general not rooted, results for rooted trees can be derived
or are partially already known~\cite{BM11,BM13}.
Note that naturally every unordered tree is upward planar,
while an ordered tree is upward planar if and only if its embedding is bimodal. 

\subsection{Unordered Trees}
The planar slope number of an unordered undirected tree with maximum degree $\Delta$
is $\ceil{ \Delta / 2}$~\cite{DESW07}.
Therefore not surprisingly, the upward planar slope number of an unordered
directed tree $T$ equals the maximum in- and outdegree of~$T$.
To show this, we draw $T$ as subgraph of a larger, regular tree~$T_{k,h}$ for $h \geq 1$
where every non-leaf vertex has in- and outdegree $k$ and each leaf has distance~$h$ to a central vertex.
To draw $T_{k,h}$ on a grid with $k$ slopes, 
we adopt the strategy of Bachmaier, Brandenburg, Brunner, Hofmeier, Matzeder, and Unfried~\cite{BBBHMU08}
for complete rooted trees; see \cref{fig:treeGrid}.
Alternatively, $T_{k,h}$ can be drawn with $k$ uniform angles; see \cref{fig:treeNice}.

\begin{theorem} \label{clm:tree:variable}
Let $T$ be an unordered directed tree on $n$ vertices with maximum indegree and outdegree at most $k$.
Then $T$ admits an upward planar $k$-slope drawing in the regular grid setting
and $T$ admits an upward planar $k$-slope drawing in the uniform angles setting.
Moreover, the drawings can be computed in $\Oh(n)$ time.
\end{theorem}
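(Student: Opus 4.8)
The plan is to prove the statement constructively by embedding the given tree $T$ into a larger, highly symmetric ``host'' tree $T_{k,h}$ for which we have an explicit drawing recipe, and then showing that restricting the drawing to the subtree $T$ preserves upward planarity and the $k$-slope property. The key observation that makes this work is that the host tree $T_{k,h}$ is \emph{universal} in the sense that any directed tree with maximum in- and outdegree at most $k$ and radius (from a suitable central vertex) at most $h$ embeds into it as a subtree that respects the edge directions. Since the drawing of $T_{k,h}$ is itself upward planar and uses only $k$ slopes, deleting the vertices and edges of $T_{k,h} \setminus T$ can only remove segments and cannot create crossings or introduce new slopes, so the induced drawing inherits both properties immediately.

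First I would fix the host tree carefully. Pick an arbitrary vertex $r$ of $T$ as a center; then $h$ is the eccentricity of $r$ in $T$, which is at most $n$. In $T_{k,h}$ every internal vertex has exactly $k$ incoming and $k$ outgoing edges and every leaf is at distance $h$ from the center, so by a greedy breadth-first assignment one maps $r$ to the center of $T_{k,h}$ and then maps the children of each already-placed vertex injectively onto distinct children in the host, matching in-edges to in-edges and out-edges to out-edges. This is possible precisely because the degree bounds of $T$ are respected by the $k$-regular structure of $T_{k,h}$; I would state this embedding as a short lemma or inline observation. The slopes around each vertex of $T_{k,h}$ are assigned consistently (the $k$ incoming edges use the $k$ slopes in counterclockwise order, likewise for the $k$ outgoing edges), so the restriction to $T$ yields a consistent $k$-slope assignment automatically.

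Next I would describe the two drawings of $T_{k,h}$ explicitly, one per setting. For the regular grid setting I would follow the recursive layout of Bachmaier \etal~\cite{BBBHMU08} for complete rooted trees, adapted so that in- and out-subtrees fan out symmetrically from the center; the recursion places the subtrees in disjoint wedges and scales successive levels geometrically so that no two subtrees overlap. For the uniform angles setting I would use the $k$ angles $i\pi/k - \pi/(2k)$ and argue, again by a scaling/separation argument, that subtrees hung off distinct outgoing (or incoming) slopes occupy disjoint angular sectors and hence never cross. In both cases upward monotonicity is immediate because every slope has a strictly positive vertical component (in the grid setting we choose grid slopes that point upward; in the uniform setting the angles all lie strictly between $0$ and $\pi$).

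The main obstacle I expect is the separation/non-crossing argument for the host drawing, i.e.\ proving that the recursively placed subtrees are pairwise disjoint so that the whole drawing is genuinely planar; this is the geometric heart of the construction and is where the geometric scaling factor between consecutive levels must be chosen with care (too small a gap lets deep subtrees collide). I would handle it by a clean inductive invariant bounding each subtree inside an axis-aligned box or an angular wedge of controlled width, exactly as in the complete-tree layouts cited, and then invoke that the cited construction already guarantees this for $T_{k,h}$. Finally, for the running time I would note that the BFS embedding of $T$ into $T_{k,h}$ visits each vertex and edge once, and the coordinates of each placed vertex are computed in constant time from its parent's coordinates and its slope index, giving the claimed $\Oh(n)$ bound; importantly we never materialize all of $T_{k,h}$, only the $n$ vertices actually used by $T$.
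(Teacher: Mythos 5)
Your proposal is correct and follows essentially the same route as the paper: embed $T$ into the regular host tree $T_{k,h}$ centered at a suitable vertex, draw the host recursively with subtrees confined to disjoint regions (squares on the grid, regular $2k$-gons for uniform angles), restrict to $T$, and obtain linear time by generating only the $n$ vertices of $T$ rather than all of $T_{k,h}$. The only cosmetic difference is that the paper centers at the midpoint of a longest path (halving the recursion depth) and uses nested $2k$-gons rather than angular wedges for the uniform-angles separation argument, neither of which changes the substance.
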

\begin{proof}
Let $\ell$ be the number of vertices on a longest undirected path in $T$.
We first describe how to construct an upward planar $k$-slope drawing of $T' = T_{k, {\ceil{\ell / 2}}}$.
Let $\rho$ be the central vertex of $T'$.
Since $T$ is a subgraph of $T'$,
a drawing for $T$ could be obtained from the drawing of $T'$ straightforwardly.
However, since $T'$ might be substantially larger than $T$,
we describe at the end how to construct the drawing of $T$ directly.   

To draw $T_{k,{\ceil{\ell / 2}}}$ on a grid we first place $\rho$
at the center of an axis-aligned square that functions as the drawing region.
We then partition this larger square into $(\ceil{k / 2} + 1)^2$ smaller equal-sized squares.
The neighbors of $\rho$ are placed at the centers of the smaller squares 
that appear along the perimeter of the outer square.
If $k$ is even, then all smaller squares get occupied
where the left central square gets a predecessor of $\rho$
and the right central square gets a successor of $\rho$.
If $k$ is odd, then the left and right central squares stay unoccupied.
For each neighbor $v$ of $\rho$, we then proceed recursively within the square of $v$; see \cref{fig:treeGrid}.
Hence, the trees obtained from removing $\rho$ in $T'$ are all drawn in disjoint regions
and are thus non-overlapping.
The recursive procedure ends at the leaves and with the smallest squares.
The size of these smallest squares determines the final grid~size.
\begin{figure}[tb]
  \centering
  	\begin{subfigure}[t]{0.45 \linewidth}
		\centering
		\includegraphics[page=1]{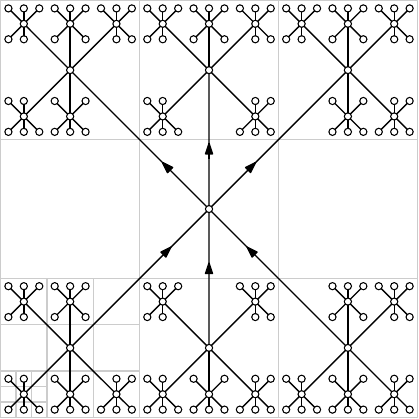}
		\caption{$T_{3,3}$, $k = 3$.}
		\label{fig:treeGrid:t33}
	\end{subfigure}
	\begin{subfigure}[t]{0.45 \linewidth} 
		\centering
		\includegraphics[page=2]{treeConstruction}
		\caption{$T_{6,2}$, $k = 6$.}
		\label{fig:treeGrid:t62}
	\end{subfigure}
  \caption{Upward planar $k$-slope drawings of unordered trees on the grid.}
  \label{fig:treeGrid} 
\end{figure}

Alternatively, to draw $T'$ with $k$ slopes and uniform angles,
we use a regular $2k$-gon (instead of a square) as drawing region and 
place $\rho$ at its center.
For the subtrees, we then use appropriately smaller regular $2k$-gons; see \cref{fig:treeNice}.
More precisely, we pick the sizes such that 
one corner of each smaller $2k$-gon is incident with a corner of the larger $2k$-gon,
and, for odd $k$, two small $2k$-gons touch at a corner, as in \cref{fig:treeNice:t33},
while for even $k$, they touch at a side, as in \cref{fig:treeNice:t62}.
The process continues recursively as before.
\begin{figure}[tb]
  \centering
  	\begin{subfigure}[t]{0.45 \linewidth}
		\centering
		\includegraphics[page=3]{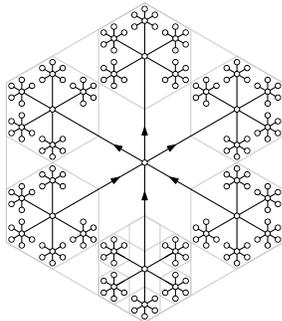}
		\caption{Unordered tree $T_{3,3}$, $k = 3$.}
		\label{fig:treeNice:t33}
	\end{subfigure}
	\begin{subfigure}[t]{0.45 \linewidth} 
		\centering
		\includegraphics[page=4]{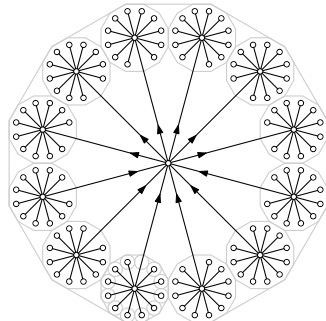}
		\caption{Unordered tree $T_{6,2}$, $k = 6$.}
		\label{fig:treeNice:t62}
	\end{subfigure}
  \caption{Upward planar $k$-slope drawings of unordered trees with uniform angles.}
  \label{fig:treeNice} 
\end{figure}

Lastly, we describe how to compute such drawings for $T$ directly.
First, find a longest path $P$ in~$T$ in linear time~\cite{BSZVGF02}
and determine a vertex $\rho$ in the middle of $P$.
Second, recursively construct a consistent $k$-slope assignment with $\rho$ as start vertex. 
For the next step, note that the squares or $2k$-gons 
are only used to find (and prove) appropriate edge lengths for each depth. 
However, we can also compute these lengths (and the grid size) directly in $\Oh(n)$ time.
Finally, again recursively from $\rho$, 
draw the edges of $T$ with their assigned slopes and the computed edge lengths.
Overall, each step and thus the whole algorithm runs in $\Oh(n)$ time. 
\end{proof}

Note that the recursive drawing procedure from \cref{clm:tree:variable} 
requires an exponential-size drawing area (or at least an exponential edge-length ratio).
For an arbitrary number of slopes, 
Frati~\cite{Fra08} showed that an area of size $\Theta(n \log n)$ suffices to draw an $n$-vertex tree upward planar.
As far as we know, it remains open whether
an upward planar $k$-slope drawing of an unordered directed tree with maximum in- and outdegree $k$ 
requires an exponential area.
This is the case for ordered directed trees, which we consider next.

\subsection{Ordered Trees}
Let $T$ be an ordered directed tree
that admits an upward planar drawing.
Hence, the vertices of $T$ are bimodal.
To determine the upward planar slope number of $T$,
it suffices to find a consistent $k$-slope assignment for $T$ with minimal~$k$.
We can then use the drawing algorithm from unordered trees.
In this regard, note that the maximum in- and outdegree are natural lower bounds
but that the choice of the (minimal) slope for an edge $vw$ cannot be determined locally at $v$ and $w$.
For example, the edge $vw$ in \cref{fig:treeSlopes:problem} is the third incoming edge at $w$
but requires at least slope 4,
since its preceding edge $uw$ already requires slope $3$ at $u$.
This effect only appears along alternating intervals of incoming and outgoing edges.
Hence we have the following observation; see also \cref{fig:treeSlopes:straight,fig:treeSlopes:alternating}.

\begin{observation}
The upward planar slope number of ordered directed trees with $n$ vertices, $n \geq 2$,
is bounded within $1$ and $n-1$ and these bounds are tight. 
\end{observation}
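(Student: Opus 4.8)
The plan is to prove the two bounds separately: the upper bound (the slope number never exceeds $n-1$) and the lower bound (there exist trees achieving $1$ and achieving $n-1$), and in each case exhibit a tree witnessing tightness.

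For the \emph{lower endpoint}, I would observe that a tree on $n \geq 2$ vertices realized as a single directed path $v_1 \to v_2 \to \dots \to v_n$ has every vertex with indegree and outdegree at most $1$, so each vertex touches at most two edges and there is no conflict; all edges can share a single slope (draw the path as one straight vertical segment). Since every tree on $n \geq 2$ vertices has at least one edge and thus needs at least one slope, the minimum value $1$ is attained and is also a trivial global lower bound.

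For the \emph{upper bound} $n-1$, I would argue that any consistent slope assignment needs at most one slope per edge, and a tree on $n$ vertices has exactly $n-1$ edges, so $n-1$ slopes always suffice by simply giving every edge its own distinct slope; one must only check that $n-1$ distinct slopes can be chosen consistently with the bimodal cyclic orders, which is immediate since distinct slopes impose no ordering conflict. For \emph{tightness of the upper endpoint}, I would build on the mechanism already illustrated in \cref{fig:treeSlopes:problem}: the slope needed for an edge can be forced upward by the slope its endpoint already committed to elsewhere, and this forcing propagates along a maximal alternating interval of incoming and outgoing edges. The idea is to construct a ``caterpillar-like'' alternating path where each successive edge is forced to use a slope strictly larger than its predecessor, so that an $n$-vertex instance forces all $n-1$ slopes to be pairwise distinct.

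Concretely, I would take the central spine to alternate orientation (a zig-zag $u_1 \to u_2 \leftarrow u_3 \to u_4 \leftarrow \dots$), attaching pendant edges so that at each spine vertex the already-fixed slope of one incident spine edge pushes the slope of the next spine edge up by one, exactly as the edge $uw$ forces $vw$ to slope $4$ in the example. By induction along the spine, the edge incident to the $i$-th vertex must use at least slope $i$, and since the spine has $n-1$ edges this forces slope $n-1$. The main obstacle I expect is the tightness construction for $n-1$: I must verify that the forcing is genuinely unavoidable for \emph{every} consistent assignment (not merely for one natural drawing), which requires carefully setting up the bimodal cyclic orders so that no relabeling or reflection escapes the monotone propagation of slope indices along the alternating interval. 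Once the cyclic orders are pinned down, the induction itself is routine.
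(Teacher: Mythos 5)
Your lower-bound witness (a monotone directed path drawn on a single line) and your upper bound (a tree has only $n-1$ edges) match the paper exactly; the only place you diverge is the tightness witness for $n-1$, and there your construction is both more complicated than necessary and under-specified in a way that matters. The paper's witness (\cref{fig:treeSlopes:alternating}) is a \emph{pure} alternating directed path $u_1 \to u_2 \leftarrow u_3 \to u_4 \leftarrow \dots$ with no pendant edges: at every internal vertex the two incident edges are both incoming or both outgoing, so in any consistent assignment they must receive distinct slopes, and with the prescribed cyclic order the ccw-later edge must receive a strictly larger slope index. This alone gives the monotone propagation $\mathrm{slope}(e_i) \ge \mathrm{slope}(e_{i-1})+1$ along all $n-1$ edges, with no pendants needed to ``push'' anything. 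Your caterpillar, by contrast, spends vertices on pendants, so its spine has fewer than $n$ vertices and the spine forcing alone yields strictly fewer than $n-1$ slopes; to recover the count you would additionally have to wedge each pendant \emph{between} the two spine edges in the rotation at its vertex so that it contributes its own $+1$ increment, which you never state. Either drop the pendants and use the bare zig-zag path, or make that insertion explicit; as written, the construction does not obviously force $n-1$ slopes on $n$ vertices. Your worry about reflections is not an issue: mirroring reverses all cyclic orders simultaneously, so the propagation is monotone in the opposite direction and still forces $n-1$ pairwise distinct slopes.
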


\begin{figure}[tb]
  \centering
  	\begin{subfigure}[t]{0.30 \linewidth}
		\centering
		\includegraphics[page=2]{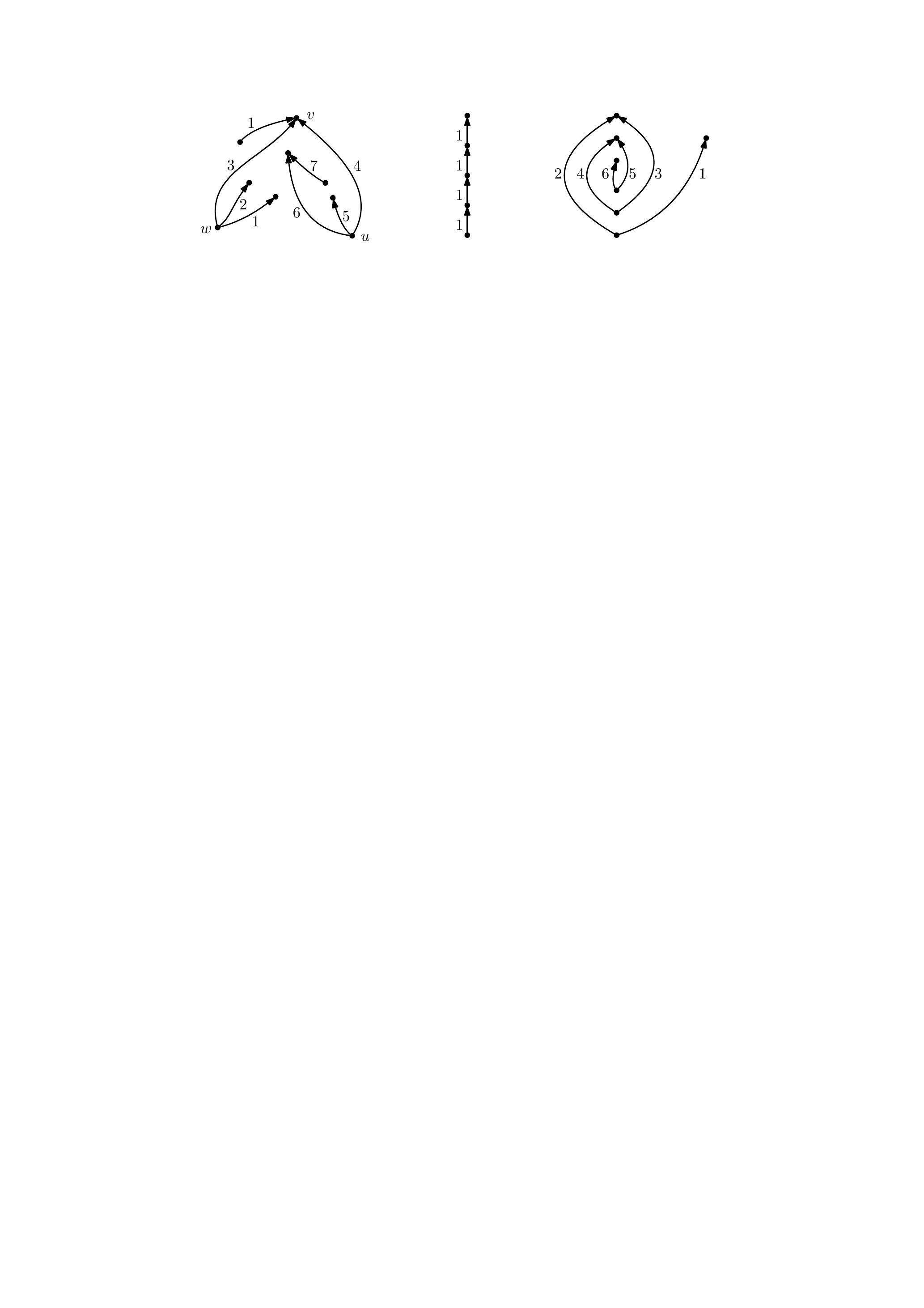}
		\caption{The third incoming edge $vw$ of $w$ cannot have slope 3.}
		\label{fig:treeSlopes:problem}
	\end{subfigure}
	$\quad$
	\begin{subfigure}[t]{0.30 \linewidth}
		\centering
		\includegraphics[page=3]{treeSlopeDependencies}
		\caption{A directed path with upward planar slope number 1.}
		\label{fig:treeSlopes:straight}
	\end{subfigure}
	$\quad$
	\begin{subfigure}[t]{0.30 \linewidth}
		\centering
		\includegraphics[page=4]{treeSlopeDependencies}
		\caption{A directed path with upward planar slope number $n-1$.}
		\label{fig:treeSlopes:alternating}
	\end{subfigure}
  \caption{The upward planar slope number of an $n$-vertex ordered directed tree lies between 1 and $n-1$,
  since it is not determined locally.
  Here the number represents different slopes.}
  \label{fig:treeSlopes}
\end{figure}

A consistent $k$-slope assignment for $T$ minimizing $k$ can be constructed
with a simple greedy algorithm in linear time.

\begin{theorem} \label{clm:tree:slopenumber}
The upward planar slope number $k$ of an ordered directed tree on $n$ vertices
can be determined in $\Oh(n)$ time.
Moreover, an upward planar $k$-slope drawing of $T$ can be constructed in $\Oh(n)$ time.
\end{theorem}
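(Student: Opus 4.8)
The plan is to show that the upward planar slope number $k$ of an ordered directed tree $T$ equals the minimum over all consistent $k$-slope assignments, and that this minimum can be computed greedily in linear time. First I would root $T$ at an arbitrary vertex $\rho$ (say, the root of the given bimodal embedding, or any vertex) and process the tree in a single traversal. The key observation, already hinted at in \cref{fig:treeSlopes:problem}, is that the slope forced on an edge is \emph{not} determined locally: along a maximal alternating interval of incoming and outgoing edges at a vertex, the slope indices must strictly increase (or decrease) in the counterclockwise cyclic order, and this constraint propagates from one endpoint of an edge to the other. Hence the slope of an edge $vw$ depends on the slopes already committed at $v$ \emph{and} on the position of $vw$ within the incoming/outgoing block at $w$.

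The main step is to set up the greedy assignment. For each vertex I would partition its incident edges, in the cyclic (bimodal) order, into one contiguous block of incoming edges and one contiguous block of outgoing edges. Processing the edges in a rooted traversal order (parent edge first, then children), I assign to each edge the smallest slope index consistent with (i)~the slopes of edges already assigned at the \emph{same} endpoint — respecting that within an incoming block slopes increase counterclockwise and within an outgoing block they increase counterclockwise as well — and (ii)~the slope already forced at the other endpoint by the parent edge. Because $T$ is a tree, each edge has exactly one parent-side endpoint whose relevant slopes are already fixed when the edge is visited, so there is never a cyclic dependency; the greedy choice is well defined and never needs to be revised. I would argue by a standard exchange/induction argument that this greedy choice is optimal: any consistent assignment must use at least as large a slope index on each edge, so the maximum slope index produced by the greedy procedure equals the true upward planar slope number $k$.

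The two remaining pieces are the running time and the drawing. For the running time, each edge is examined a constant number of times and each vertex's incoming/outgoing blocks are scanned once, so the assignment phase is $\Oh(n)$; extracting $k$ as the maximum assigned index is immediate. For the drawing, I would invoke \cref{clm:tree:variable}: once a consistent $k$-slope assignment is in hand, the recursive edge-length computation described there produces an upward planar $k$-slope drawing in $\Oh(n)$ time, since that construction needs only a consistent slope assignment and per-depth edge lengths, both computable directly in linear time.

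The part I expect to be the main obstacle is proving optimality of the greedy slopes, specifically formalizing how the ``increasing index within an alternating interval'' constraint propagates across edges and showing that no globally cheaper consistent assignment exists. The delicate case is a long directed path of alternating in/out edges, as in \cref{fig:treeSlopes:alternating}, where the forced indices grow all the way to $n-1$; I would handle this by a careful induction along the path, tracking the minimum feasible slope index reachable at each vertex and verifying that the greedy value coincides with this minimum.
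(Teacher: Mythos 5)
Your overall plan --- greedily give each edge the smallest slope compatible with its counterclockwise predecessors, extract $k$ as the maximum assigned index, and then draw via \cref{clm:tree:variable} --- is the same as the paper's. The gap is in the processing order. You claim that a single rooted top-down traversal (``parent edge first, then children'') works because ``each edge has exactly one parent-side endpoint whose relevant slopes are already fixed when the edge is visited,'' so the greedy choice never needs revision. That is not true. The lowest feasible slope of an edge $uv$ is one plus the maximum of the slope of its ccw-preceding \emph{outgoing} edge at $u$ and its ccw-preceding \emph{incoming} edge at $v$; at the child-side endpoint of $uv$ this predecessor is necessarily an edge leading further down into the subtree, whose slope is not yet known when a top-down traversal reaches $uv$. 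The paper's own \cref{fig:treeSlopes:problem} is a counterexample: root the tree at $w$. The edge $uw$ must receive slope $3$ because of the outgoing edges at $u$, i.e., because of edges strictly \emph{below} $uw$ in the rooted tree, and only after that can one deduce that $vw$ needs slope $4$. Your traversal would commit $uw$ to slope $2$ (its position in the incoming block at $w$) and would have to backtrack, contradicting your claim that no revision is needed.

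The paper avoids this by not fixing a traversal order at all: each edge gets one ``mark'' per endpoint once all its ccw predecessors at that endpoint are assigned, and is enqueued only upon receiving its second mark. This processes the edges in a topological order of the dependency relation (which is acyclic precisely because $T$ is a tree, since a dependency chain traces a path alternating between the two endpoints of consecutive edges), and the tree structure also guarantees the queue never runs dry before all edges are handled. To repair your proof, replace the single rooted traversal by such a dependency-driven order, or by an explicit recursive evaluation that is shown to visit both predecessors of an edge first. The remaining pieces --- optimality of the greedy value by induction along dependency chains, the $\Oh(n)$ bound from touching each edge a constant number of times, and reusing the drawing routine of \cref{clm:tree:variable} --- match the paper and go through once the order is corrected.
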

\begin{proof}
Let $T$ be an ordered directed tree on $n$ vertices.
We describe a greedy algorithm 
that computes a consistent $k$-slope assignment for $T$ where $k$ is minimal.
The idea is that we assign the slopes $1, 2, \ldots$ in counterclockwise (ccw) order
around each vertex, but an edge $uv$ receives a slope
only if all outgoing edges at $u$ that ccw precede $uv$ 
and all incoming edges at $v$ that ccw precede $uv$ already got a sloped assigned.
Therefore, an edge $uv$ receives a mark when it has this property at $u$ or at $v$,
and gets added to a queue as soon as it receives its second mark. 
To find the first edge in the queue, we mark the ccw first incoming and outgoing edge at each vertex;
this can be done in overall $\Oh(n)$ time.
Observe that after this initialization,
there always exists an edge with two marks in a tree.

For an edge $uv$ that we take from the head of the queue,
we assign $uv$ the lowest available slope, 
that is, one plus the maximum of the slope of its ccw preceding outgoing edge at $u$ and its ccw preceding incoming edge at $v$;
if there is no preceding edge, then slope 1 is available. 
We then mark the ccw succeeding edges of $uv$ at $u$ and $v$.
Since $T$ is a tree, there is (until the algorithm terminates) 
always at least one edge in the queue. 

When the algorithm terminates, each edge is assigned a slope,
which increase in ccw order for the incoming and for the outgoing edges of each vertex.
Furthermore, the highest assigned slope is 
also the upward planar slope number of $T$ with respect to its embedding.
The running time of the algorithm is in $\Oh(n)$
since the algorithm considers each edge only a constant number of times.

Once we have computed the upward planar slope number $k$ of $T$ with respect to its embedding,
we can use the drawing algorithm from \cref{clm:tree:variable} 
to compute an upward planar $k$-slope drawing of $T$ again in $\Oh(n)$ time. 
\end{proof}

\begin{corollary} \label{clm:tree:fixed}
Let $T$ be an ordered directed tree on $n$ vertices with maximum indegree and outdegree at most~$k$.
We can decide in $\Oh(n)$ time whether $T$ admits an upward planar $k$-slope drawing.
\end{corollary}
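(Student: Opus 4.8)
The plan is to reduce the decision directly to the computation of the upward planar slope number established in \cref{clm:tree:slopenumber}. First I would run the greedy algorithm of \cref{clm:tree:slopenumber} on $T$ to obtain, in $\Oh(n)$ time, the upward planar slope number $k'$ of $T$ with respect to its given embedding, that is, the smallest number of slopes for which $T$ admits a consistent slope assignment and hence an upward planar drawing. Since $T$ has maximum in- and outdegree at most $k$ by assumption, $k'$ is well defined (and in fact satisfies $k' \geq \max$ in-/outdegree), so the greedy procedure applies without modification.

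The key observation is a monotonicity statement: $T$ admits an upward planar $k$-slope drawing if and only if $k' \leq k$. For the forward direction, any upward planar $k$-slope drawing uses at most $k$ slopes, so by the definition of the slope number $k' \leq k$. For the converse, the drawing produced by \cref{clm:tree:slopenumber} uses exactly $k'$ slopes; since $k' \leq k$, this same drawing uses at most $k$ slopes and is therefore also a valid upward planar $k$-slope drawing (we are simply free to leave $k - k'$ of the available slopes unused). Hence the decision reduces to the single comparison $k' \leq k$, which takes $\Oh(1)$ time. Combining the two steps, the whole procedure runs in $\Oh(n)$ time, as claimed.

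I do not expect a genuine obstacle here, as the statement is an immediate consequence of \cref{clm:tree:slopenumber}. The only point that warrants care is the monotonicity equivalence above: one must confirm that allowing more slopes never invalidates a drawing that uses fewer. This holds because a $k'$-slope drawing is, verbatim, a straight-line drawing whose edges each take one of at most $k \geq k'$ slopes, so no additional geometric argument (such as re-embedding or rescaling) is required beyond invoking the constructive part of \cref{clm:tree:slopenumber}.
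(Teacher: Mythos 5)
Your proposal is correct and matches the paper's intent exactly: the corollary is stated without proof precisely because it follows from Theorem~\ref{clm:tree:slopenumber} by computing the slope number $k'$ and comparing it to $k$, which is what you do. The only detail worth adding is an initial $\Oh(n)$ bimodality check on the given embedding (answering ``no'' if it fails), since the theorem's greedy algorithm presupposes that $T$ is upward planar, whereas the corollary's hypothesis does not.
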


Frati~\cite{Fra08} showed that an alternating ordered directed path on $n$-vertices as in \cref{fig:treeSlopes:alternating}
requires $\Omega(2^n)$ area. However, this path also requires $n-1$ slopes.
It is hence natural to ask whether there exist also ordered directed trees
that require an exponential area in upward planar $k$-slopes drawings for a constant $k$.
This question has recently been answered by Quapil~\cite{Qua21} for $k = 3$
by giving a similar path with a spiral structure
that requires exponential area;
the example can be extended to higher $k$ and, by adding a single triangle somewhere, to cactus digraphs.

\section{Cactus Graphs}
\label{sec:cactus}
\begin{figure}[tb]
	\centering
	\includegraphics{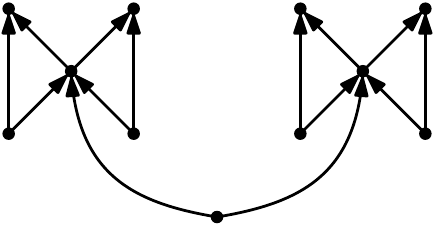}
	\caption{Cactus digraph with maximum in- and outdegree 3,
		which admits even in the variable embedding scenario 
		no upward planar 3-slope drawing as shown by Quapil~\cite{Qua21}.
	}
	\label{fig:cactus:no3slopes}
\end{figure}
In this section, we show that
deciding whether a given cactus digraph~$G$
admits an upward planar~$k$-slope drawing
is fixed-parameter tractable (FPT) in $k$.
This holds for both, the fixed and the variable embedding scenario.
In particular, constructing an upward planar $k$-slope drawing
of a cactus digraph can be done in polynomial time if $k$ is constant.

Roughly speaking, we use a dynamic program on the block-cut tree of~$G$
that computes combinable~$k$-slope assignments for each block.
Note that an acyclic cactus digraph~$G$ is always upward planar
and if no upward planar embedding is specified,
then the algorithm can compute one.
However, as Quapil~\cite[Figure 4.3]{Qua21} pointed out, 
unlike directed trees, not every cactus digraph with maximum in- and outdegree at most~$k$
admits an upward planar~$k$-slope drawing;
see \cref{fig:cactus:no3slopes}.

\begin{figure}[tb]
  \centering
	\begin{subfigure}[t]{0.31 \linewidth}
		\centering
		\includegraphics[page=3]{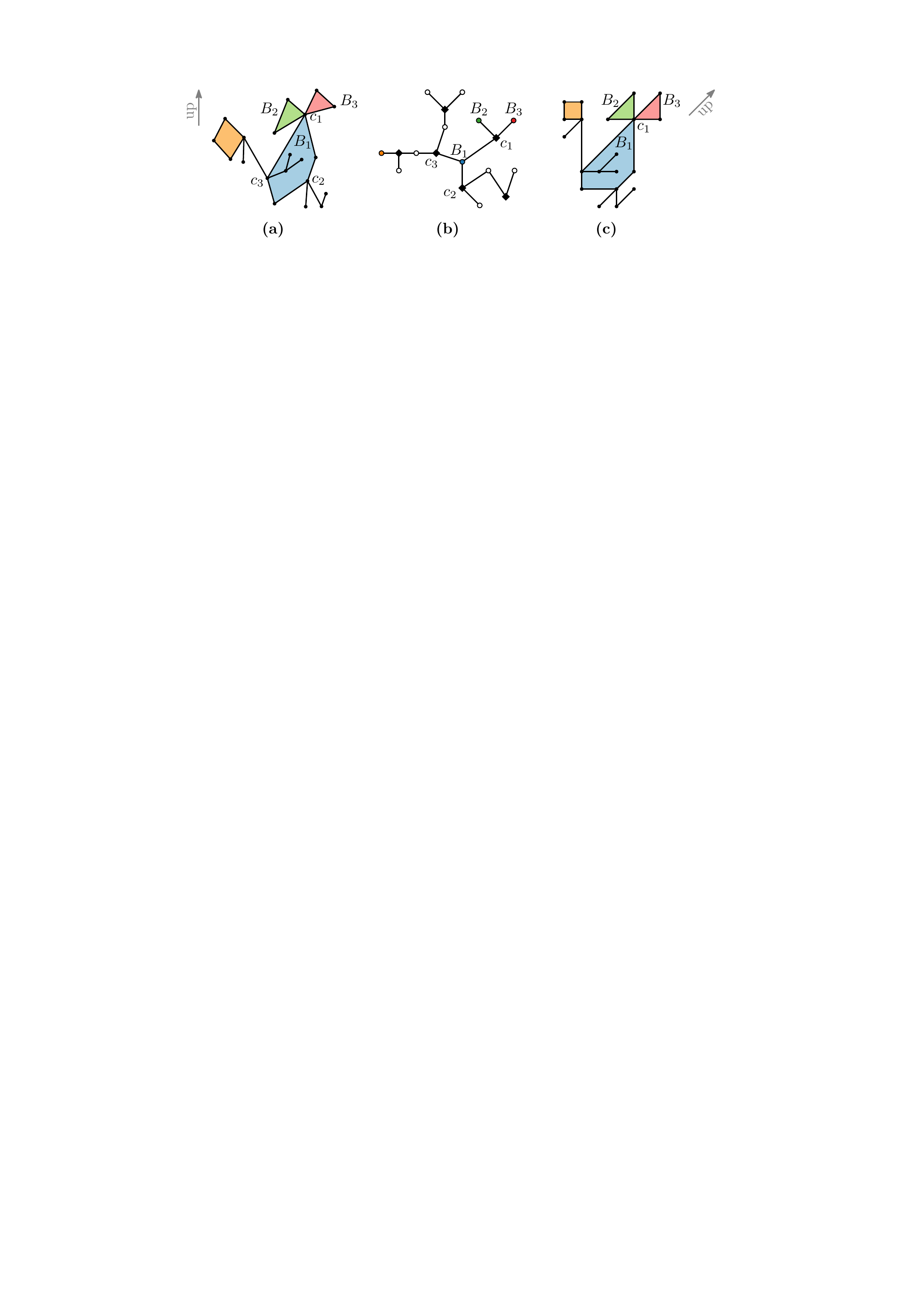}
		\caption{A cactus digraph $G$.}
		\label{fig:cactus:example:G}
	\end{subfigure}
	\begin{subfigure}[t]{0.31 \linewidth}
		\centering
		\includegraphics[page=4]{cactusExample}
		\caption{Block-cut tree of $G$.}
		\label{fig:cactus:example:T}
	\end{subfigure}
	\begin{subfigure}[t]{0.31 \linewidth}
		\centering
		\includegraphics[page=5]{cactusExample}
		\caption{3-slope drawing of $G$.}
		\label{fig:cactus:example:D}
	\end{subfigure}
  \caption{The vertices (blocks) of the block-cut tree of a cactus digraph~$G$ 
  correspond to the cycles and edges of~$G$; we draw all blocks separately and then merge their drawings.}
  \label{fig:cactus:example}
\end{figure}

Recall that a block-cut tree~$\cT$ of a graph~$G$ has a vertex 
for each \emph{block} (biconnected component) and each cut vertex of~$G$
and an edge between a block~$B$ and a cut vertex~$c$ if~$c$ is part of~$B$;
see \cref{fig:cactus:example:T}. 
Let~$G$ be a cactus digraph. Note that in a block-cut tree~$\cT$ of~$G$
each block vertex is either a cycle or an edge 
-- we thus distinguish between \emph{cycle blocks} and \emph{edge blocks}.
The block-cut tree of~$G$ can be computed in linear time~\cite{Tar72}.
For a cactus digraph~$G$ to admit an upward planar~$k$-slope drawing,
each block of~$\cT$ must be drawable under constraints imposed by other blocks.
Consider for an example the cactus~$G$ in \cref{fig:cactus:example:G,fig:cactus:example:T,fig:cactus:example:D}
under the given embedding.
For~$k = 3$,
the two edges of the block~$B_1$ incident to the cut vertex~$c_1$
need the slopes \Svert and \Sgiad because of the blocks~$B_2$ and~$B_3$. 
Our strategy is thus as follows.

\paragraph{Algorithm.}
In the first phase, we run a dynamic program (described below) on the blocks of~$\cT$
to find a consistent slope assignment for each block
such that the blocks are {\em combinatorially} combinable.
If successful, we enter the second phase, 
where we compute drawings of the blocks that are {\em geometrically} combinable.
In the last phase we put all block drawings together.

Let~$G$ be a cactus with blocks~$B_1, \ldots, B_\ell$ 
and let~$\cT$ be the block-cut tree of~$G$.
We pick an arbitrary block vertex, say~$B'$, of~$\cT$ as root and direct all edges towards~$B'$.
As a result, each block vertex~$B$ (except~$B'$) has exactly one outgoing edge towards a cut vertex~$c$ in~$\cT$.
We then say~$c$ is the \emph{anchor} of~$B$.
Let~$B$ be a cycle block with anchor~$c$ 
and suppose we have a slope assignment for~$B$.
Let~$e$ and~$e'$ be the edges of~$B$ incident to~$c$
such that clockwise (cw) along the inner face of~$B$ we have the sequence~$e, c, e'$.
Then the \emph{anchor type}~$t_c(B)$ of~$c$ for~$B$
is defined as the slopes of~$e$ and~$e'$ and if they are incoming or outgoing edges at~$c$; see \cref{fig:cactus:categories}.
For an edge block~$B$ with edge~$e$, the \emph{anchor type}~$t_c(B)$ describes the slope of~$e$ 
and if~$e$ is incoming or outgoing at~$c$.
For cycle blocks and edge blocks, there are~$2k \cdot (2k-1)$ and~$2k$ different anchor types, respectively.

\begin{figure}[tb]
	\centering
	\includegraphics{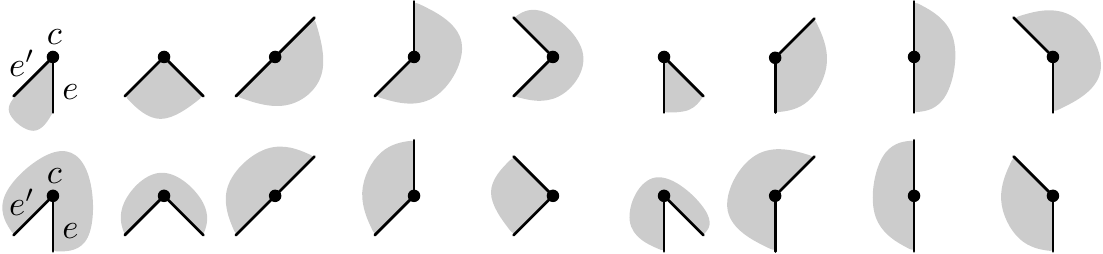}
	\caption{A subset of the anchor types of a cycle block for~$k = 3$.}
	\label{fig:cactus:categories}
\end{figure}

For a block vertex~$B$ with anchor~$c$, 
a \emph{block tuple}~$\tau_B = \langle \phi_B, t_c(B) \rangle$
consists of a consistent~$k$-slope assignment~$\phi_B$ of~$B$ 
and an anchor type~$t_c(B)$ of~$c$.
A block tuple~$\tau_B$ is \emph{feasible}
if $B$ has no descendant blocks or if $B$'s descendant blocks admit
a non-empty set of feasible block tuples that can be combined with $\tau_B$.
A \emph{feasible set} for~$B$ is a maximal set of feasible block tuples for~$B$
that have pairwise different anchor types.
We process~$\cT$ in a post-order traversal.
For each block we compute a feasible set based on the feasible sets
of its descendant blocks.

\paragraph{Combinatorial Realization.}

Computing the feasible set of a cycle block~$B$ with anchor~$c$ works as follows.
Let~$B$ be the cycle~$(c = v_1, e_1, v_2, e_2, \ldots, v_{\abs{B}}, e_{\abs{B}}, v_1)$ 
-- if an embedding is given, let this order be cw around the inner face.
Roughly speaking, in a dynamic programming approach, 
we find for each edge $e_i$, $i \in \set{1, \ldots, \abs{B}}$, 
all possible slopes of~$e_i$ based on the possible slopes of~$e_{i-1}$.
Furthermore, we also consider how far we have rotated along the boundary of $B$ from $v_1$ to $v_{i+1}$
to keep track of whether we have a total sum of inner (outer) angles around $B$ of $\pm 2\pi$.
More precisely, 
we walk around~$B$ once and store for each edge $e_i$, $i \in \set{1, \ldots, \abs{B}}$, 
all tuples of (i) a possible slope $s$ of $e_i$ and 
(ii) the corresponding sum~$\alpha$ of rotation angles when traversing $B$ from $v_1$ to $v_{i+1}$ (consistently, along one side of $B$).
We call the tuple $\croc{s, \alpha}$
an \emph{edge tuple} of~$e_i$.
As a base for the sum of rotation angles, when considering $e_1$,
we use the angle a horizontal ray emanating at $v_1$ and pointing towards positive infinity
would need to rotate to contain $e_1$; see \cref{fig:cactus:cycleWalk:a}.

Next we describe how to compute all edge tuples of the edges of a block~$B$.
We start with computing the edge tuples of~$e_1$ 
by considering all combinations of feasible block tuples of descendant blocks of~$B$ anchored at~$v_1$ and~$v_2$.
For each possible combination, we can test which slopes $e_1$ may get in $\Oh(k)$ time. 
The corresponding rotation angle can be computed in $\Oh(1)$ time.
A descendant cycle block can have $\Oh(k^2)$ many feasible block tuples,
while a descendant edge block can have $\Oh(k)$ many feasible block tuples.
Together at $v_1$ and $v_2$, there can be at most $2(k-1)$ descendant cycle blocks
and at most $4(k-1)$ descendant edges.
The total number of possible combinations of feasible block tuples of descendant blocks of $B$
is thus at most  $\Oh((k^2)^{2k-2}) = \Oh(k^{4k-4})$.
(To see this, note that pairs of edge block descendants 
can be considered as cycle block descendants for this calculation.) 
Therefore, we can find all possible edge tuples of $e_1$ in $\Oh(k^{4k-3})$ time. 
In the example of \cref{fig:cactus:cycleWalk:a} for~$k = 3$, 
assuming a fixed embedding, the edge~$e_1$ can only have slope \Svert 
and we have thus rotated~$90^\circ$.
For this edge tuple~$\croc{\Svert, 90^\circ}$, the edge~$e_2$ in \cref{fig:cactus:cycleWalk:b}
has also only one possible slope, namely~\Sdiag, 
and the rotation increases by~$135^\circ$ to a total of~$225^\circ$.
However, in the variable embedding scenario,~$e_1$ can also have slopes \Sdiag and \Sgiad, see \cref{fig:cactus:cycleWalk:c,fig:cactus:cycleWalk:d}.

In general, to compute the edge tuples of an edge~$e_i$,~$i \in \set{2, \ldots, \abs{B}}$,
we can use the same procedure to determine all possible slopes of~$e_i$
as we have described for~$e_1$.
The main difference is that we need to take into account also all edge tuples of~$e_{i-1}$.
We can treat the slope of an edge tuple of~$e_{i-1}$
like another descendant edge block.
Additionally we update the sum of rotation angles for $e_i$.
The number of different edge tuples of~$e_{i-1}$ is in $\Oh(k \abs{B})$
because each edge tuple of $e_{i-1}$ has one of $k$ slopes,
the sum of angles for each pair of edge tuples having the same slope
differs by a multiple of~$2 \pi$, and
each of the $\Oh(\abs{B})$ previous edges adds a rotation of $< 2 \pi$.
Thus, since we have $\Oh(k \abs{B})$ edge tuples for $e_{i-1}$,
$\Oh(k^{4k-4})$ feasible block tuples of descendant blocks,
and $\Oh(k)$ possible slopes for~$e_i$,
we can compute all edge tuples of~$e_i$ in $\Oh(k^{4k-2} \abs{B})$ time.
For all edges of $B$ this takes $\Oh(k^{4k-2}{\abs{B}}^2)$ time.

Note that we may obtain the same edge tuple in multiple ways~--
the actual number of edge tuples for each~$e_i$ is again $\Oh(k \abs{B})$.
For each edge tuple~$t$ of~$e_i$, we store a pointer to $t$ at the $\Oh(k)$ edge tuple(s) of~$e_{i-1}$ that $t$ is based on.
In other words, we build a digraph $H$ on the edge tuples of the edges $e_1, \ldots, e_{\abs{B}}$.
The digraph~$H$ has $\Oh(k \abs{B}^2)$ vertices and $\Oh(k^2 \abs{B}^2)$ edges.

\begin{figure}[tb]
  \centering
  	\begin{subfigure}[t]{0.23 \linewidth}
		\centering
		\includegraphics[page=3]{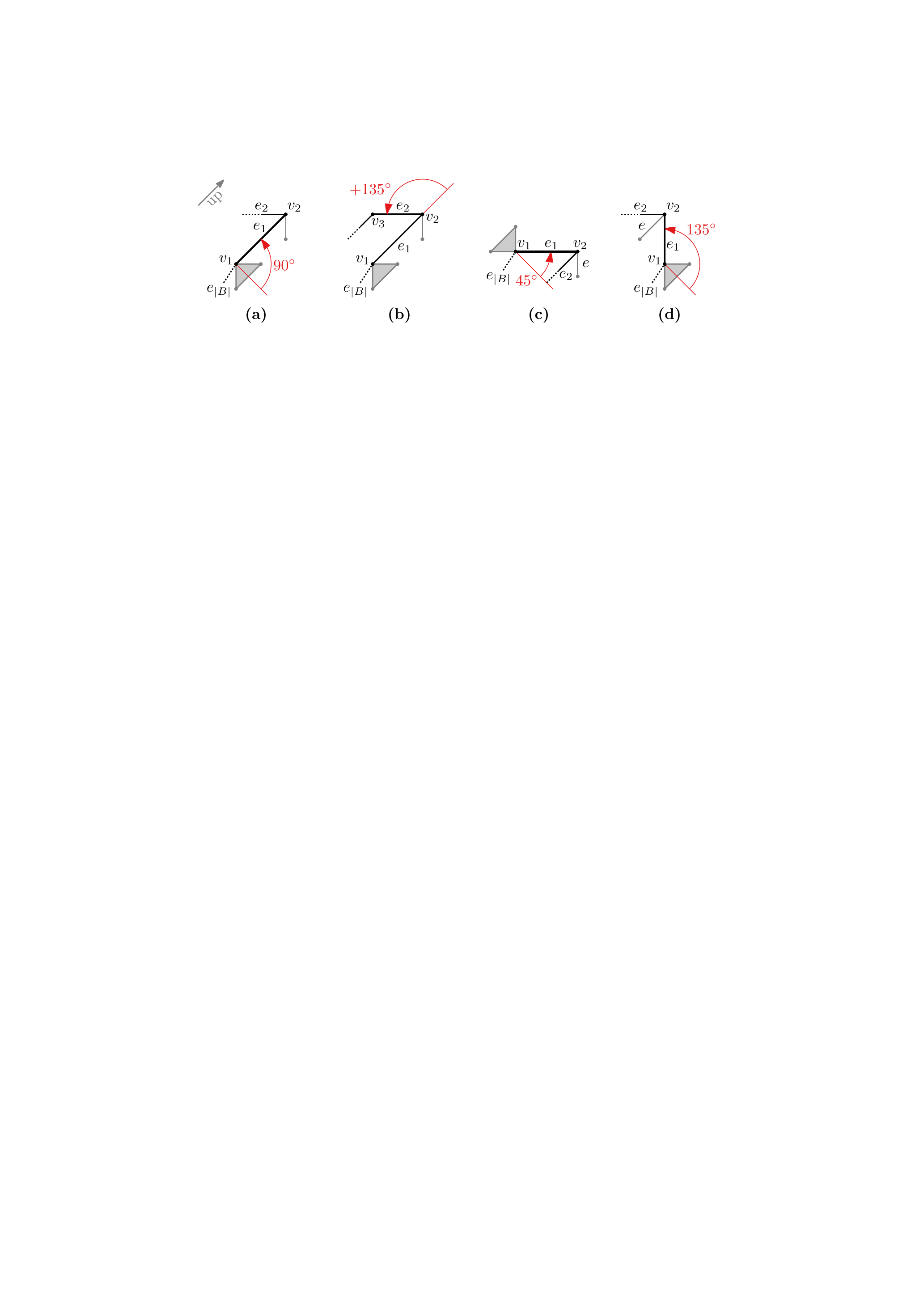}
		\caption{For a fixed embedding, $e_1$ starts with slope~\Svert, \ldots}
		\label{fig:cactus:cycleWalk:a}
	\end{subfigure}$\;\;$
	\begin{subfigure}[t]{0.22 \linewidth}
		\centering
		\includegraphics[page=4]{cactusCycleWalk}
		\caption{\ldots and then $e_2$ gets slope~\Sdiag.}
		\label{fig:cactus:cycleWalk:b}
	\end{subfigure}$\;\;$ 	 
	\begin{subfigure}[t]{0.23 \linewidth}
		\centering
		\includegraphics[page=5]{cactusCycleWalk}
		\caption{In the variable embedding scenario,
		$e_1$ may get slope \Sdiag \ldots}
		\label{fig:cactus:cycleWalk:c}
	\end{subfigure}$\;\;$
	\begin{subfigure}[t]{0.24 \linewidth}
		\centering
		\includegraphics[page=6]{cactusCycleWalk}
		\caption{\ldots or $e_1$ may get slope~\Sgiad.}
		\label{fig:cactus:cycleWalk:d}
	\end{subfigure}
  \caption{Computing a slope assignment for a cycle block $B$ with anchor $v_1$.
  	The algorithm handles the edges of $B$ one by one starting with $e_1$.
  }
  \label{fig:cactus:cycleWalk}
\end{figure}

When we handle~$e_{\abs{B}}$,
we discard all tuples that do not result in a~$2\pi$ rotation if the embedding is given
or with~$\pm 2\pi$ if no embedding is given. 
This ensures that the cycle has a geometric realization~\cite{CR85}.
To determine the feasible set for~$B$,
we start a breadth-first search (BFS) on $H$ for each edge tuple of $e_1$.
For each edge tuple of $e_{\abs{B}}$ we reach,
we combine the slope of~$e_1$ and~$e_{\abs{B}}$ as well as whether
the rotation is~$+2\pi$ or~$-2\pi$ to get an anchor type of~$B$ at~$c$. 
By backtracking from the edge tuple of~$e_{\abs{B}}$ within the BFS tree,
we find a consistent slope assignment of~$B$, which yields a feasible block tuple.

We have $\Oh(k)$ BFS runs, which take $\Oh(k^2 \abs{B}^2)$ time each
and $\Oh(k^3 \abs{B}^2)$ time in total.
Backtracking for all $\Oh(k^2)$ anchor types is in $\Oh(k^2 \abs{B})$ time in total.
Hence, these steps are dominated by the computation of the edge tuples
and we can compute a feasible set for~$B$ in $\Oh(k^{4k-2}{\abs{B}}^2)$ time.
Over all blocks of~$G$, this in $\Oh(k^{4k}n^2)$
since each vertex is in at most $k$ blocks
and, therefore, we have $\sum_{i = 1}^\ell \abs{ B_i }^2 \leq (kn)^2$.

Finally, let us describe how to obtain a combinatorial realization for~$G$.
If the feasible set of the root block~$B'$ is non-empty,
then there is a combinatorial realization of~$G$.
To find one, we pick any feasible block tuple of~$B'$
and try to combine it with the feasible block tuples of its descendant blocks.
(We know that a consistent combination exists.)
Since each feasible set has size $\Oh(k^2)$, we can find
a consistent set of feasible block tuples of descendant blocks
in $\Oh((k^{2})^{k-1})$ time per vertex of~$B'$.
Over all vertices of~$G$ this is in $\Oh(k^{2k-2} n)$ time.

\paragraph{Geometric Realization.}
Suppose we have found a combinatorial realization in the form of
a consistent $k$-slope assignment for every cycle and edge of~$G$.
In the variable embedding scenario, we now know whether and how cycles nest.
We thus re-root $\cT$ such that the root block lies on the outer face.
In the following, we describe how to obtain a drawing of a cycle block $B$ as a polygon
that does not intersect the edges of its parent block $B'$ at its anchor point~$c$.

We describe this only for the uniform angles setting
and leave it as an open question for the regular grid setting.
Given any sequence~$\sigma$ of rational angles (i.e., a rational number times $\pi$)
that sum up to $\pm 2 \pi$, Culberson and Rawlins~\cite{CR85} 
describe an algorithm that outputs a polygon with~$\sigma$ as turning angles.
Their so-called \textit{Turtlegon} algorithm works as follows.
It defines a base angle $\alpha$ as the greatest common divisor of $\pi$ and all angles in~$\sigma$;
in our case this is $\pi / k$
(w.l.o.g.\ we can assume that our drawings are slightly rotated by $\frac{\pi}{2k}$).
Larger angles are split into sequences of~$\pm \alpha$ resulting in a new angle sequence~$\sigma'$.
W.l.o.g.\ let $\sigma'$ contain more angles $+\alpha$ than $- \alpha$.
Using some of the $\alpha$s,
the Turtlegon algorithm draws a regular $(2 \pi/\alpha)$-gon (in our case $2k$-gon).
To accommodate additional angles in between, it inserts exponentially
shrinking detours at the corners of the $(2 \pi/\alpha)$-gon; see \cref{fig:cactus:boundingboxes}b.
In the end, we get the original larger angles from merging the smaller angles~\cite{CR85}; see \cref{fig:cactus:boundingboxes}c.

The difficulty for us when employing this $\Oh(k\abs{B})$ time algorithm, 
is to ensure that the edges of the parent block $B'$ 
can reach the anchor point~$c$ without intersecting the polygon of $B$.
This might be impossible if $c$ lies within a spiral inside a detour.
However, we can avoid this if we let an incident edge of~$c$
be a side of the $2k$-gon (this is always possible because we can pick
an appropriate set of $\alpha$ angles of $\sigma'$ for the $2k$-gon)
and if we let each detour edge shrink by a
sufficiently large factor (e.g., $k{\abs{B}}$);
see \cref{fig:cactus:boundingboxes}.

The running time of this step is in $\Oh(k{\abs{B}})$.
Since each vertex is in at most $k$ blocks, 
we have that $\sum_{i = 1}^\ell \abs{ B_i } \leq kn$ and,
hence, the total running time is in $\Oh(k^2n)$. 

\begin{figure}[tb] 
  \centering
 	\begin{subfigure}[t]{.31 \linewidth}
 		\centering
 		\includegraphics[page=3]{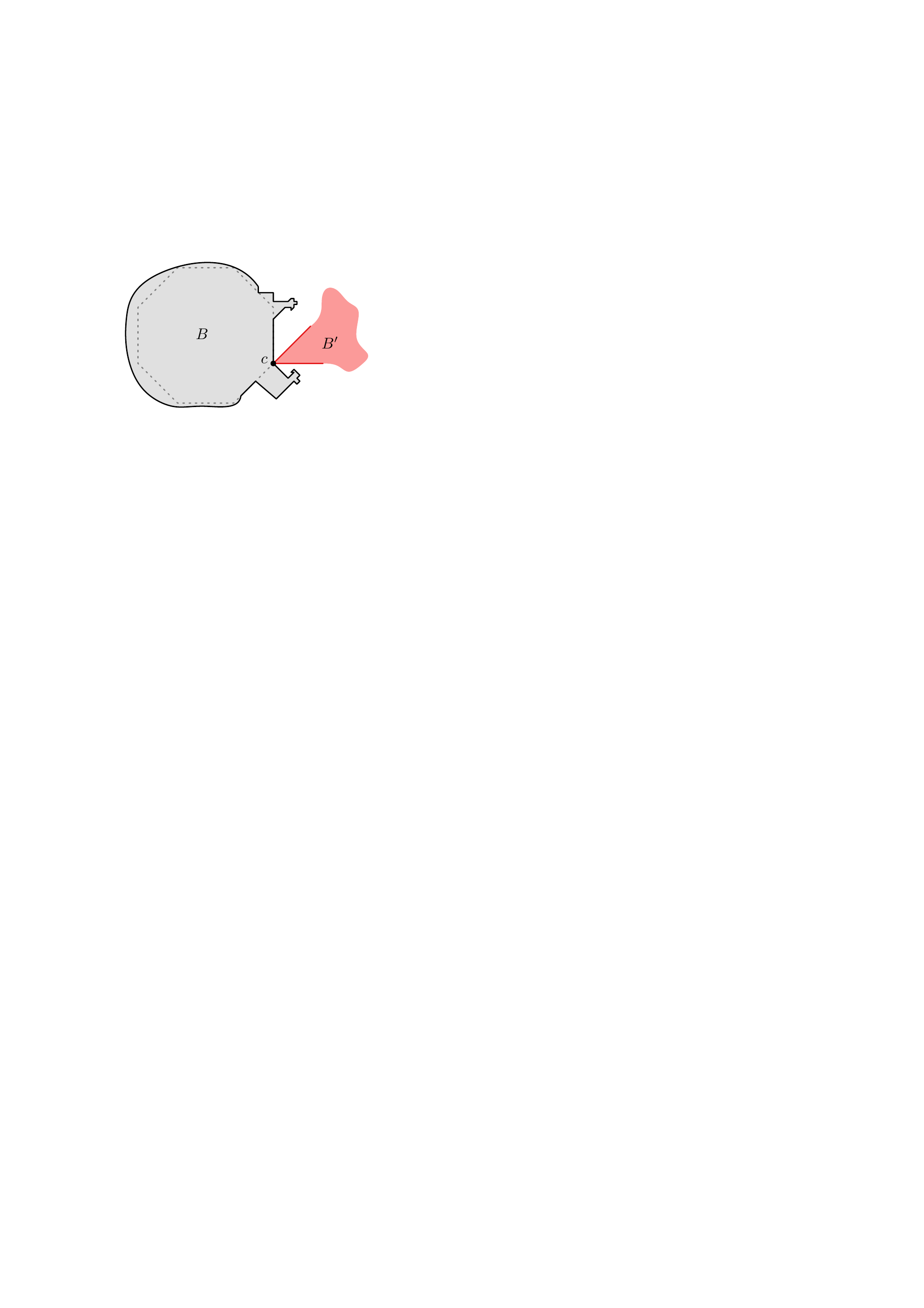}
 		\caption{A combinatorial description of a block $B$ with angle sequence that contains large angles \ldots}
 	\end{subfigure}
 	\hfill
 	\begin{subfigure}[t]{.31 \linewidth}
 		\centering
 		\includegraphics[page=4]{cactiCases}
 		\caption{\ldots is converted to an angle sequence on base angle $\alpha$
 		and realized as $2k$-gon with detours.}
 	\end{subfigure}
 	\hfill
 	\begin{subfigure}[t]{.31 \linewidth}
 		\centering
 		\includegraphics[page=5]{cactiCases}
 		\caption{Small angles originating from large angles are merged back together.}
 	\end{subfigure}
  \caption{The Turtlegon algorithm by Culberson and Rawlins~\cite{CR85}
  realizes an angle sequence of a block $B$ with anchor point $c$ by first splitting large angles,
  constructing a $2k$-gon with exponentially shrinking detours,
  and then merging small angles back together. When using the Turtlegon algorithm, 
  we ensure that the anchor point~$c$ lies at a $2k$-gon edge; here $k = 3$.}
  \label{fig:cactus:boundingboxes}
\end{figure}

\paragraph{Putting Blocks Together.}
We start with a drawing of the root block.
We then recursively draw each child (in a BFS-like order) such that its anchor 
point coincides with the corresponding vertex of the parent polygon
and scale down the drawing of the child block
such that the appended polygon does not intersect the existing drawing.
Note that it always suffices to scale down each child to the size
of the minimum distance of any two vertices within in the parent polygon.
We can determine vertex pairs of minimum and maximum distance
for a block~$B$ in $\Oh({\abs{B}} \log {\abs{B}})$ time
and then place and scale each polygon in linear time.

The total running time is dominated by the dynamic program
and thus in $\Oh(k^{4k}n^2)$.
For a constant $k$, this is a quadratic time algorithm.

\begin{theorem} \label{clm:cacti:fixed}
Let $G$ be an upward planar (or plane) cactus digraph with maximum in- and outdegree at most~$k$.
It can be constructively tested in $\Oh(k^{4k}n^2)$ time
whether $G$ admits an upward planar $k$-slope drawing
in the uniform angles setting.
In other words, constructing an upward planar $k$-slope drawing of a
cactus digraph is fixed-parameter tractable in~$k$.
\end{theorem}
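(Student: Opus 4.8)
The plan is to assemble the three-phase procedure sketched above — a dynamic program on the block-cut tree for combinatorial realizability, a per-block geometric realization via the Turtlegon algorithm, and a recursive assembly — into a proof of both correctness and the $\Oh(k^{4k}n^2)$ running time. First I would build $\cT$ in linear time \cite{Tar72}, root it at an arbitrary block $B'$, and fix anchors as above. The crucial structural observation is that two blocks of a cactus interact only through a shared cut vertex, and the sole interaction there is the competition for slopes among the edges incident to that cut vertex; bimodality and the degree bound must be respected jointly by all blocks meeting at it. This localizes the global question and lets me prove, by induction along the post-order traversal, that $G$ admits a consistent, upward-planar-realizable $k$-slope assignment if and only if the feasible set of $B'$ is non-empty: a block tuple $\tau_B = \croc{\phi_B, t_c(B)}$ is declared feasible exactly when $B$'s descendants offer feasible block tuples whose anchor types combine with $\tau_B$ at the shared cut vertices, so non-emptiness of the root's feasible set certifies a globally consistent assignment and, conversely, any global assignment restricts to feasible tuples on every block.

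Second, I would verify the edge-tuple dynamic program on a single cycle block $B = (v_1, e_1, \ldots, e_{\abs{B}}, v_1)$. Walking once around $B$, for each $e_i$ I store tuples $\croc{s,\alpha}$ of a candidate slope $s$ and the accumulated rotation $\alpha$ from $v_1$, where the admissible slopes of $e_i$ are those compatible with the slope of $e_{i-1}$ (treated as one more descendant edge) together with the feasible block tuples of descendants anchored at $v_i$ and $v_{i+1}$. The invariant to establish is that $\croc{s,\alpha}$ is reachable precisely when some consistent assignment of $e_1,\ldots,e_i$ respects bimodality at every $v_j$ and induces total rotation $\alpha$; discarding at $e_{\abs{B}}$ every tuple whose total rotation is not $+2\pi$ (fixed embedding) or $\pm 2\pi$ (variable embedding) then isolates exactly the assignments whose turning-angle sequence sums to $\pm 2\pi$, which by Culberson and Rawlins \cite{CR85} are exactly those realizable as a simple polygon. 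A breadth-first search on the auxiliary digraph $H$ over all edge tuples, with backtracking, then extracts one consistent assignment per reachable anchor type and hence the feasible set of $B$. The timing follows the bookkeeping already given: $\Oh(k\abs{B})$ distinct edge tuples per edge, $\Oh(k^{4k-4})$ descendant combinations, hence $\Oh(k^{4k-2}\abs{B}^2)$ per block and $\Oh(k^{4k}n^2)$ in total using $\sum_i \abs{B_i}^2 \le (kn)^2$, which matches the claimed bound.

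Third, for the geometric phase I would feed each block's angle sequence to the Turtlegon algorithm to obtain, in the uniform angles setting, a simple polygon with base angle $\pi/k$ realizing the prescribed turns, and then argue that these polygons merge at cut vertices without crossings. I expect the main obstacle to be exactly this non-intersection requirement: the Turtlegon construction may bury the anchor point $c$ inside an exponentially shrinking detour spiral, so that the edges of the parent block cannot reach $c$. I would resolve this by forcing an edge incident to $c$ to be a side of the constructed $2k$-gon — always achievable by assigning $c$'s $\alpha$-angles to the $2k$-gon rather than to a detour — and by letting each detour shrink by a factor of at least $k\abs{B}$, so that $c$ lies on the outer boundary of its polygon and the child fits within the minimum inter-vertex gap of the parent. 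A final recursive placement that scales each child polygon below the minimum pairwise distance of the parent's vertices (computable in $\Oh(\abs{B}\log\abs{B})$ time) yields a globally planar, upward, $k$-slope drawing. Since the combinatorial phase dominates with cost $f(k)\cdot\Oh(n^2)$, the entire procedure is fixed-parameter tractable in $k$, proving the theorem.
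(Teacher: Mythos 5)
Your proposal is correct and follows essentially the same route as the paper: the identical block-cut-tree dynamic program with anchor types, block tuples, and edge tuples $\croc{s,\alpha}$, the same $\pm 2\pi$ rotation test justified via Culberson and Rawlins, the same Turtlegon-based geometric realization with the anchor forced onto a $2k$-gon side and detours shrunk by a factor of $k\abs{B}$, and the same running-time bookkeeping yielding $\Oh(k^{4k}n^2)$. No substantive differences or gaps to report.
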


For the regular grid setting, we
cannot use the algorithm by Culberson and Rawlins~\cite{CR85}
because we have irrational multiples of $\pi$ as turning angles.
For a sequence of general turning angles, the algorithm by Hartley~\cite{Har89}
computes a polygon realizing that sequence.
However, it is not immediately clear how to guarantee that the
edges of the parent polygon at the anchor point are not intersected.
For general polygons, we believe that we can iteratively shrink the spikes
to resolve potential intersections.
Since such a procedure involves some more technicalities,
we leave it as an open question.

\section{Outerplanar and Planar Graphs}
\label{sec:np}
In this section, we show that for any constant $k \ge 3$ deciding 
whether an upward planar digraph admits an upward planar $k$-slope drawing is NP-hard.
For $k = 3$, we show NP-hardness for this decision problem
even for upward outerplanar digraphs.
Except for $k = 4$, this hardness holds true regardless of whether we prescribe an embedding or not.

However, it remains open if the problem is also NP-complete.
Containment in NP is not immediately clear since it is open whether some digraphs
require irrational (or super-polynomial precise) coordinates for any $k$-slope drawing.
More precisely, if the problem was in NP, there would be small proof certificates for yes-instances 
that a verifier could use to decide the problem in polynomial time.
Typically a combinatorial characterization or a drawing of the input graph could act as such a certificate.
However in our case, we do not know whether there are digraphs 
 that require irrational (or super-polynomial precise) coordinates 
and if so, how to treat them implicitly
or, alternatively, how a combinatorial characterization would look like.

We first describe our NP-hardness reduction for embedded outerplanar digraphs for $3$ slopes.
Remember that for an arbitrary number of slopes, 
upward planarity for outerplanar digraphs can be decided in polynomial time~\cite{Pap94}~-- 
if an embedding is given, even in linear time.
Afterwards, we show how this NP-hardness reduction can be extended to the variable embedding scenario and, then, to larger~$k$ while using planar
instead of outerplanar digraphs.

In this section,
we consider drawings rotated by 45$^\circ$
(upwards is to the top right) using the
regular grid slope set $\set{\Svert, \Sdiag, \Shori}$.
This rotation facilitates the visualization of large orthogonal structures
in our figures.

\subsection{Outerplane Digraphs and \texorpdfstring{$k = 3$}{k = 3}}

In this section, we show that it is NP-hard to decide
whether a given upward outerplanar digraph with a given upward outerplanar embedding
admits an upward outerplanar drawing using only three distinct slopes
by reduction from \pms.

\pms (sometimes also known as \pmrs) is an NP-complete version of \textsc{3-SAT}~\cite{BK12},
where the three literals of each clause are all either
negated or unnegated~-- from now on called \emph{negative} and \emph{positive} clauses, respectively.
Moreover, the incidence graph\footnote{
	The incidence graph of a \textsc{SAT} formula has a vertex for each variable and clause, and 
	an edge for each occurrence of a variable in a clause between the corresponding vertices.}
has a planar drawing where 
the vertices are rectangles, the edges are vertical straight-line segments,
the variables are arranged on a horizontal line,
the positive clauses are above, and the negative clauses are below this line;
see \cref{fig:pm3sat-instance}.
For our reduction, we can allow that a clause contains
the same literal multiple times.
Hence, we can also assume that all clauses contain
exactly three literals (as otherwise we can fill up
clauses by duplicating literals and obtain an equivalent formula).

For a given \pms
formula~$F$ and a rectangular drawing of its incidence graph, 
we construct a corresponding upward outerplanar digraph~$G_F$,
which can only be drawn upward planar with 3 slopes if $F$ is satisfiable.
Our construction follows ideas of Nöllenburg~\cite{Noe10} and Kraus~\cite{Kra20},
and utilizes the fact that some digraphs can only be drawn in a way that they admit helpful properties arising in the upward planar drawing setting specifically for $k=3$.
We use them as building blocks in our reduction as subgraphs of $G_F$.
Let us describe these special (sub)graphs next.

We start with the digraph \gsquare
with vertices $s$, $x$, $y$, and $t$, and
edges $st$, $su$, $ut$, $sv$, and $vt$,  as depicted  in \cref{fig:gsquaregraph}.
Observe that, up to scaling and mirroring along a diagonal axis,
\gsquare admits an upward planar 3-slope drawing only as an outerplanar square as in \cref{fig:gsquaredrawn}.
We can attach multiple squares (and triangles) to each other as in \cref{fig:unitsquares}.
The drawing of such a bigger digraph is unique up to scaling and mirroring diagonally.
If the squares form a tree, the drawing is outerplanar.
We refer to these squares as \emph{unit squares}, since, once set, the side lengths for all attached squares are the same.

We also use \gsquares to construct our next building block~-- the digraph \gshift.
To allow a certain small degree of freedom in our construction,
we exploit the following property of~\gshift.

\begin{figure}[t]
	\centering
	\begin{subfigure}[t]{0.16 \linewidth}
		\centering
		\includegraphics[trim=0 20 350 0,clip]{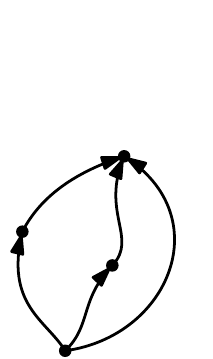}
		\caption{Digraph \gsquare.}
		\label{fig:gsquaregraph}
	\end{subfigure}
	\hfill
	\begin{subfigure}[t]{0.17 \linewidth}
		\centering
		\includegraphics[trim=70 20 275 0,clip]{baseSquareAndSlider}
		\caption{Unique 3-slope realization of \gsquare.}
		\label{fig:gsquaredrawn}
	\end{subfigure}
	\hfill
	\begin{subfigure}[t]{0.18 \linewidth}
		\centering
		\includegraphics[trim=150 20 190 36,clip]{baseSquareAndSlider}
		\caption{Combination of \gsquares and triangles.}
		\label{fig:unitsquares}
	\end{subfigure}
	\hfill
	\begin{subfigure}[t]{0.4 \linewidth}
		\centering
		\includegraphics[trim=256 20 0 0,clip]{baseSquareAndSlider}
		\caption{Digraph \gshift.}
		\label{fig:gshift}
	\end{subfigure}
	\caption{Building blocks used as subgraphs in our NP-hardness reduction.
		The digraph \gsquare admits only an upward 3-slope drawing as square.
		By combining copies of \gsquare and triangles, we can build larger rigid structures.
		Then, all \gsquares have the same size and we refer to them as unit squares.
		The digraph \gshift admits only upward 3-slope drawings with one
		degree of freedom.
	}
	\label{fig:baseSquareAndSlider}
\end{figure}

\begin{lemma} \label{clm:Gshift}
	In any upward planar 3-slope drawing of \gshift (see \cref{fig:gshift})
	\begin{itemize}
		\item the edges $e_1$ and $e_2$ are parallel and have the same arbitrary length $\ell > 0$,
		\item all edges are oriented as in \cref{fig:gshift} up to mirroring along a diagonal axis, and
		\item all vertical and horizontal edges (excluding $e_1$ and $e_2$) have the same lengths, as well as all diagonal edges.
	\end{itemize}
\end{lemma}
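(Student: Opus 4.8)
The plan is to establish the three claimed properties by analyzing the rigid structure of \gshift, exploiting the already-established rigidity of \gsquare (\cref{fig:gsquaredrawn}) and the combinations of unit squares and triangles (\cref{fig:unitsquares}). The key observation I would start from is that \gshift is built from \gsquares and triangles in a way that fixes a single ``unit'' length; once this unit is determined by the drawing, every horizontal, vertical, and diagonal edge that participates in a rigid square-or-triangle substructure is forced to have one of at most three lengths (one per slope class), up to the global diagonal-mirroring symmetry noted for \gsquare.

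First I would argue the orientation claim. Since every edge in an upward planar 3-slope drawing must use one of the three slopes $\set{\Svert, \Sdiag, \Shori}$ and point upward, and since \gshift is assembled from copies of \gsquare (each of which, by the preceding discussion, admits only the unique square realization up to diagonal mirroring), the slope assigned to each edge is forced by the rigid propagation through the shared unit squares. I would track the consistent $k$-slope assignment around each internal vertex: bimodality plus the three-slope constraint leaves no freedom except the single global mirror, so all edges are oriented as drawn up to mirroring along a diagonal axis. This also pins down which edges are vertical, which are horizontal, and which are diagonal.

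Next I would prove the uniform-length claim for all edges \emph{other than} $e_1$ and $e_2$. Because the square substructures force equal side lengths (that is the whole point of calling them unit squares), any vertical edge belonging to such a square equals the unit length, and likewise for horizontal and diagonal edges; the triangles glued on (as in \cref{fig:unitsquares}) inherit the same unit because two of their sides are sides of adjacent unit squares. I would then chase these equalities along the chain of squares and triangles making up \gshift to conclude that all vertical edges share a common length, all horizontal edges share a (possibly different) common length, and all diagonal edges share a common length. The diagonal length is $\sqrt{2}$ times the axis-aligned one in the regular grid setting, but the statement only asserts equality \emph{within} each slope class, so this suffices.

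Finally, the degree-of-freedom claim for $e_1$ and $e_2$ is the part I expect to be the main obstacle, because it is the one place where the structure is deliberately \emph{not} fully rigid. The idea is that $e_1$ and $e_2$ are the two edges bounding a ``gap'' that the rest of the rigid scaffold does not fix: the scaffold forces $e_1$ and $e_2$ to be parallel (same slope, by the orientation analysis) and to have equal length (by a symmetry/congruence argument matching the two sides of the shifting gadget), but it leaves the common length $\ell$ free over the open interval $\ell > 0$. I would make this precise by exhibiting, for each $\ell > 0$, a valid assignment of coordinates consistent with all the rigid constraints — essentially sliding one rigid half of \gshift relative to the other along the shared slope direction — and conversely arguing that no constraint in the gadget bounds $\ell$ from above or below. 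The care needed here is to verify that this sliding never forces a crossing or a slope violation for any $\ell$, i.e.\ that upward planarity and the three-slope property are preserved throughout the one-parameter family; that planarity check across the whole range is where the genuine work lies.
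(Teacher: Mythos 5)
Your high-level strategy matches the paper's: exploit the rigidity of the \gsquare substructures to fix a unit length and the slopes of all edges inside squares and triangles, and then treat $e_1$ and $e_2$ as the one non-rigid part. The length claims for the square/triangle edges and the final observation that $\ell>0$ is forced by planarity are fine. However, there are two concrete gaps at exactly the places where the lemma has actual content. First, $e_1$ and $e_2$ belong to no \gsquare and no triangle, so your ``rigid propagation through shared unit squares'' says nothing about their slopes; asserting they are parallel ``by the orientation analysis'' is circular. The paper pins down the slope of $e_1$ by a local degree argument: $v_2$ has outdegree $3$, so its outgoing edges must use all three slopes; since $e_3=v_2v_5$ is a vertical unit-square edge and the triangle $(e_3,e_4,e_5)$ can only close if $e_4$ takes the slope \Sdiag (the rightmost outgoing slope at $v_5$ being \Shori), the only slope left for $e_1$ is \Shori. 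The symmetric argument at $v_4$ (indegree $3$) handles $e_2$. Without some such argument you have not shown $e_1\parallel e_2$.

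Second, each of the two rigid halves of \gshift is determined only up to its \emph{own} diagonal mirroring, so a priori one half could be mirrored while the other is not, which would make $e_2$ vertical while $e_1$ is horizontal. Your proposal assumes there is only ``the single global mirror'' but never excludes this relative mirroring. The paper does so explicitly: if the right block were mirrored, then $v_4$, $v_1$, and $v_2$ would be collinear on a horizontal line, contradicting that $v_3$ lies vertically below both of its neighbors $v_2$ and $v_4$. Once both of these points are supplied, the equal length of $e_1$ and $e_2$ follows because $v_1,v_2,v_3,v_4$ form a parallelogram (two equal vertical unit edges joined by two horizontal edges), which is more concrete than the ``symmetry/congruence argument'' you gesture at. By contrast, the effort you propose to spend on realizing every $\ell>0$ is not where the difficulty lies; the lemma is a statement about \emph{any} drawing, and the paper only needs (and only proves) the necessity direction together with $\ell>0$.
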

\begin{proof}
As subgraphs, we have twice a pair of attached \gsquares.
As observed before, each pair can only be drawn as attached unit squares.
Without loss of generality, let the first two squares (containing vertices $v_2$, $v_3$, $v_5$) 
be drawn vertically above each other as in \cref{fig:gshift}.
Since $v_2$ has outdegree 3 and $e_3 = v_2v_5$ has slope~\Svert, we can argue that the edge~$e_1$ gets slope \Shori.
If~$e_1$ used \Sdiag, the edge~$e_5$ would have no slope to close the triangle~$(e_3, e_4, e_5)$ because, 
at~$v_5$, the rightmost outgoing slope is only~\Shori.
Thus, $e_4$ gets~\Sdiag and the only remaining slope for $e_1$ is \Shori.
Note that $e_3$ is a unit square edge and the adjacent edges $e_4$ and $e_5$ form a triangle with $e_3$
that has the same size as half a unit square. 
Hence, $e_4$ ($e_5$) has the same length as all diagonal (resp.\ horizontal) edges.

The same argument as for the outgoing edges of $v_2$ applies to the incoming edges of~$v_4$.
However, assume for contradiction, 
that the whole block on the right side of~\cref{fig:gshift} is mirrored diagonally, 
i.e., $v_1$ and $v_4$ are on the same horizontal line and $e_2$ gets slope~\Svert.
Then, $v_4$, $v_1$, and $v_2$ would be on the same horizontal line.
This contradicts $v_3$ being vertically below its neighbors $v_2$ and $v_4$.
Hence, the right block has the same orientation as the left one, and $e_2$ gets slope \Shori and is parallel to $e_1$.
This implies that the distances between $v_2$ and $v_3$ on the one hand, 
and $v_1$ and $v_4$ on the other hand are the same 
and the \gsquares (and their attached triangles) have the size and orientation as in \cref{fig:gshift}.
By the orientation of $v_1, v_2, v_3, v_4$, the edges $e_1$ and $e_2$ have an identical length~$\ell$.
Only a value $\ell \leq 0$ would cause a non-planar drawing.
\end{proof}

\begin{figure}[t]
	\centering
	
	\begin{subfigure}[t]{1 \linewidth}
		\centering
		\includegraphics[width=.54\linewidth,page=1]{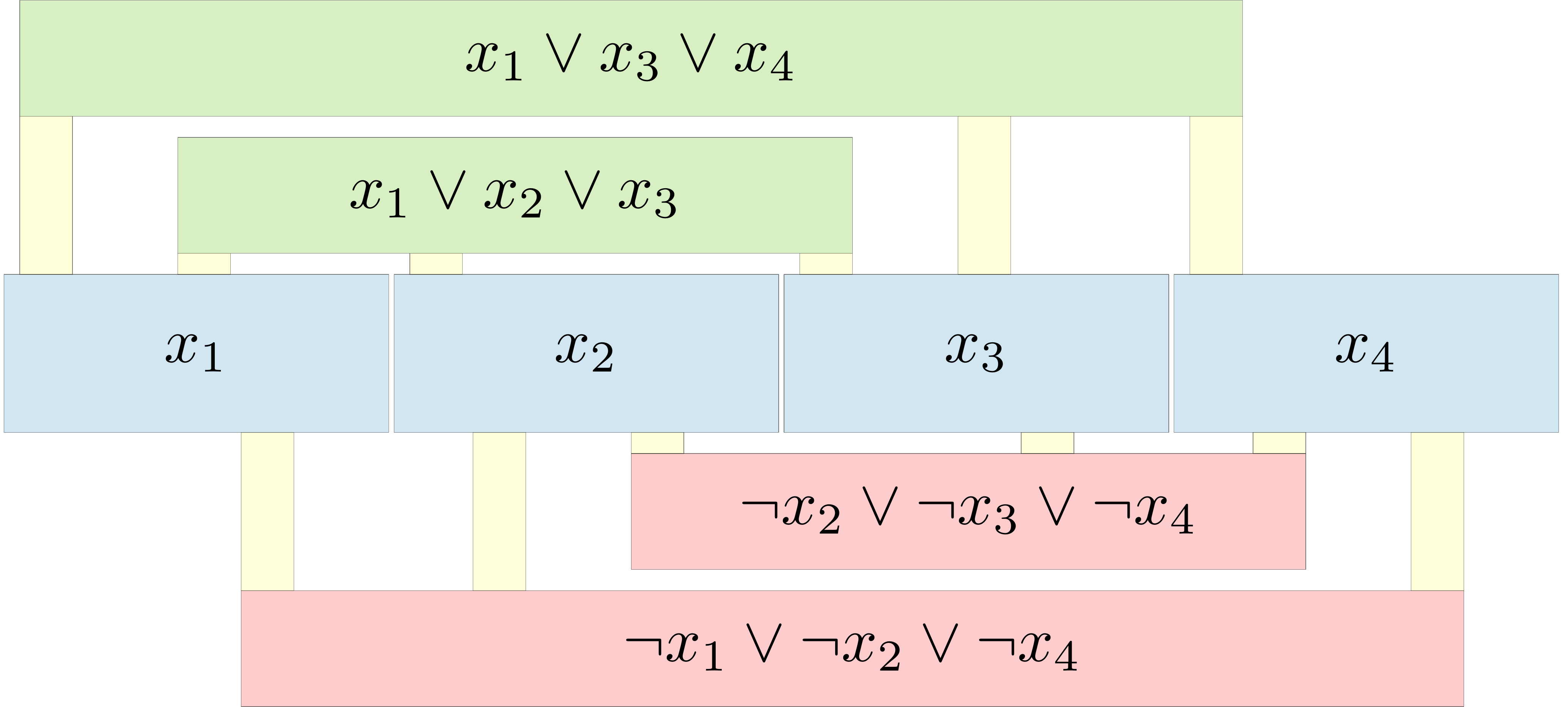}
		\caption{Rectangular incidence graph drawing of a \pms formula~$F$.}
		\label{fig:pm3sat-instance}
	\end{subfigure}
	
	\bigskip
	
	\begin{subfigure}[t]{1 \linewidth}
		\centering
		\includegraphics[width=1\linewidth,page=3]{construction}
		\caption{Outerplanar drawing of the digraph~$G_F$ obtained from~(a).
			Chains of unit squares are drawn as straight-line segments.
			The variable/clause/edge gadgets occupy the areas of their corresponding rectangles.
			Here, $x_1$ and $x_4$ are set to \textsf{false} ({\re brush} on the left side within their variable gadgets),
			while $x_2$ and $x_3$ are set to \textsf{true} ({\re brush} on the right side within their variable gadgets)}
		\label{fig:construction-3slopes}
	\end{subfigure}
	\caption{Schematic example for our NP-hardness reduction.}
	\label{fig:construction}
\end{figure}

With this construction kit of useful (sub)graphs in hand, we build a digraph
whose upward planar drawings represent the satisfying truth assignments for~$F$.

The \textbf{high-level construction} is depicted in \cref{fig:construction-3slopes}.
We construct, for each variable~$x_i$, a specific digraph~-- the \emph{variable gadget for $x_i$} (blue in \cref{fig:construction-3slopes}).
Similarly, for each clause~$c_j$, there is a specific digraph~-- the \emph{clause gadget for~$c_j$} (green and red in \cref{fig:construction-3slopes}).
All gadgets mainly consist of chains of \gsquares.
For a drawing, this enforces a rigid frame structure built from unit squares.
We glue all variable gadgets together in a row and connect variable and clause gadgets by \emph{edge gadgets} (yellow in \cref{fig:construction-3slopes})
such that the composite digraph remains upward outerplanar
and all \gsquares are drawn as unit~squares.

\begin{figure}[t]
	\centering
	\includegraphics[width=.92\textwidth]{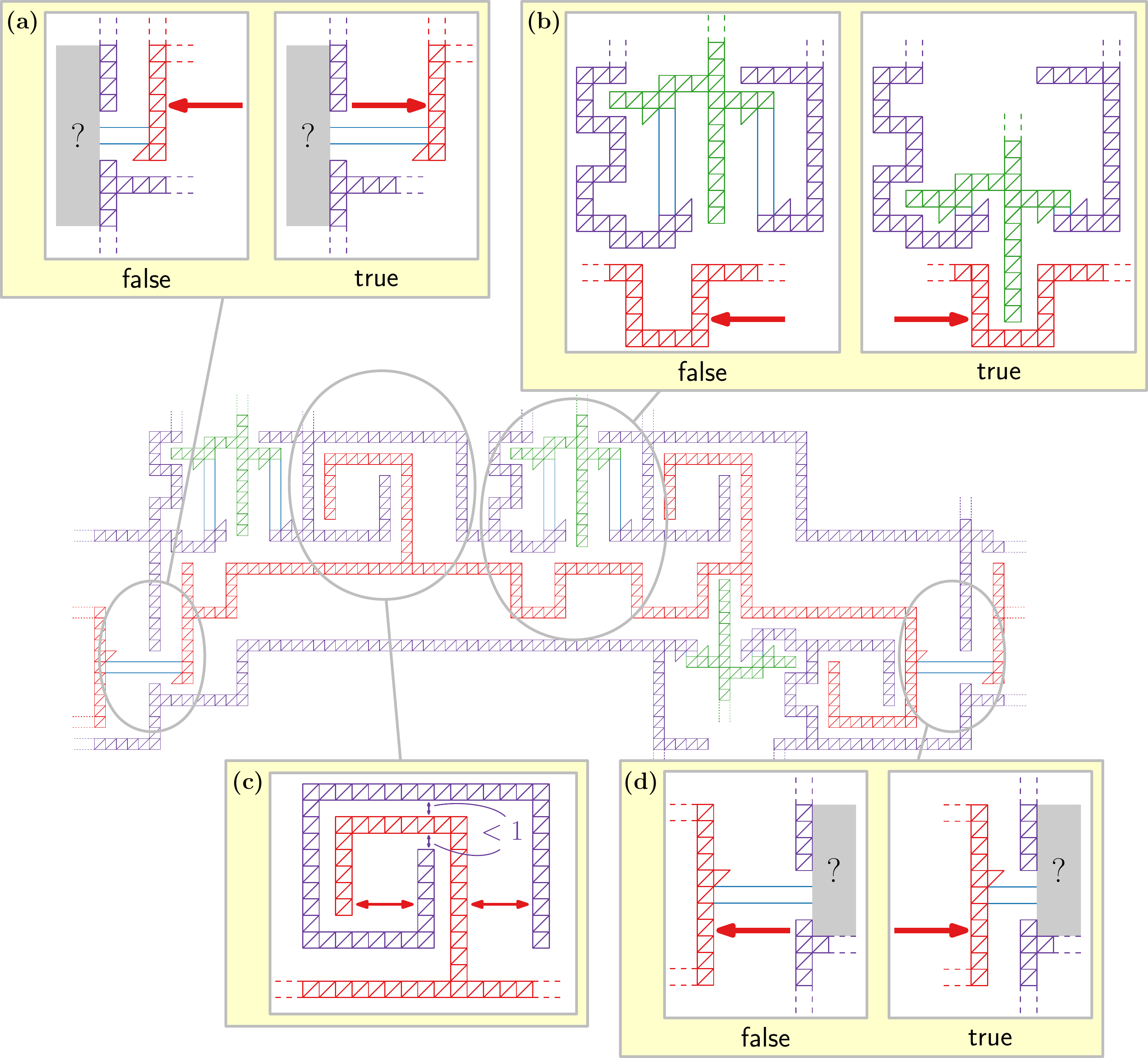}
	\caption{A variable gadget, which is contained in two positive and one negative clauses.
		The {\re brush} is positioned to the left and, thus, the variable is set to \textsf{false}.}
	\label{fig:variableGadget}
\end{figure}

\begin{figure}[t]
 	\centering
 	\begin{subfigure}[t]{.31 \linewidth}
 		\centering
 		\includegraphics[page=4,width=1.0\textwidth]{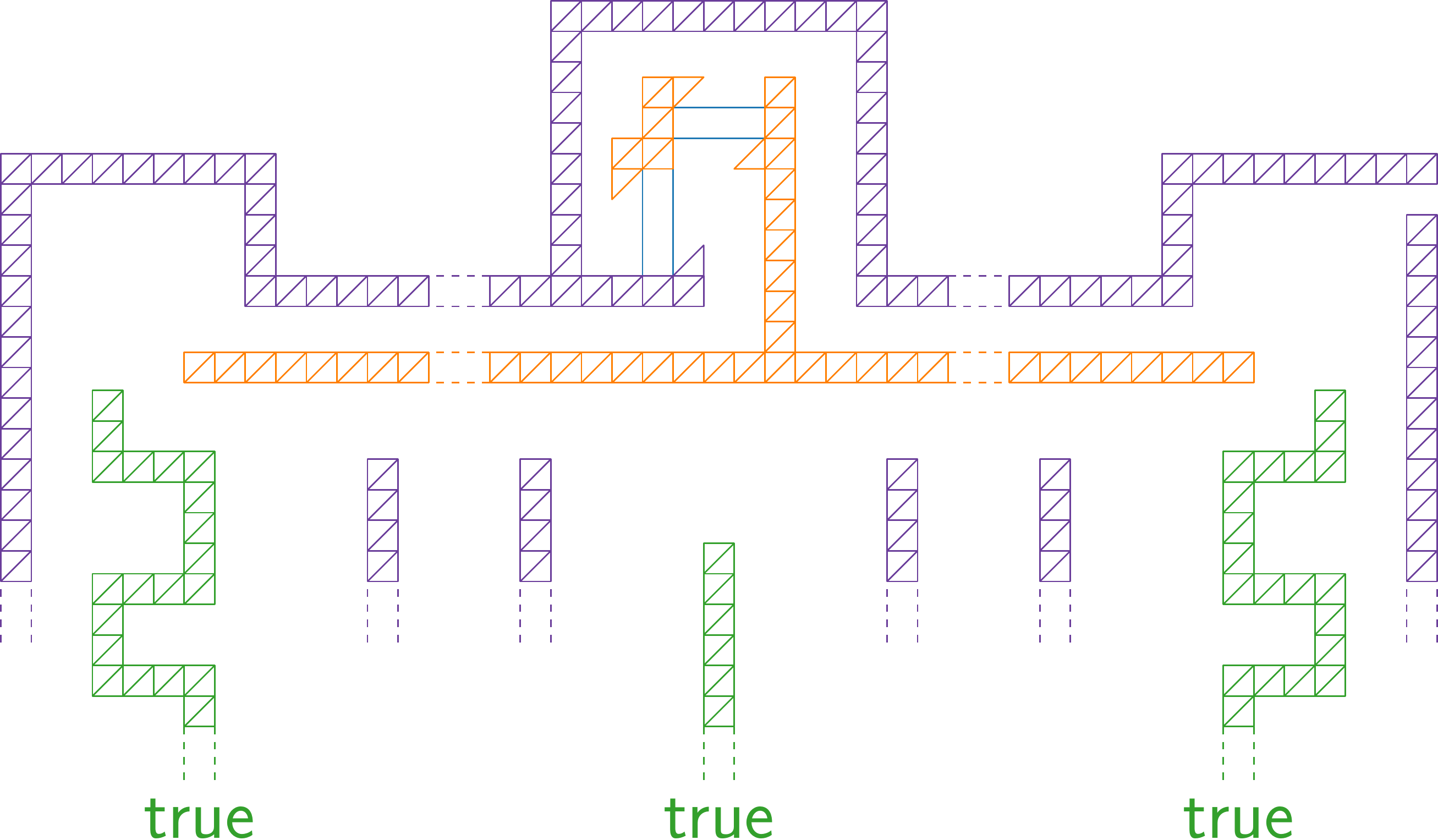}
 	\end{subfigure}
 	\hfill
 	\begin{subfigure}[t]{.31 \linewidth}
 		\centering
 		\includegraphics[page=6,width=1.0\textwidth]{clauseGadget}
 	\end{subfigure}
 	\hfill
 	\begin{subfigure}[t]{.31 \linewidth}
 		\centering
 		\includegraphics[page=7,width=1.0\textwidth]{clauseGadget}
 	\end{subfigure}
 	
 	\smallskip
 	
 	\begin{subfigure}[t]{.31 \linewidth}
 		\centering
 		\includegraphics[page=5,width=1.0\textwidth]{clauseGadget}
 	\end{subfigure}
 	\hfill
 	\begin{subfigure}[t]{.31 \linewidth}
 		\centering
 		\includegraphics[page=3,width=1.0\textwidth]{clauseGadget}
 	\end{subfigure}
 	\hfill
 	\begin{subfigure}[t]{.31 \linewidth}
 		\centering
 		\includegraphics[page=2,width=1.0\textwidth]{clauseGadget}
 	\end{subfigure}
 	
 	\smallskip
 	
 	\begin{subfigure}[t]{.31 \linewidth}
 		\centering
 		\includegraphics[page=1,width=1.0\textwidth]{clauseGadget}
 	\end{subfigure}
 	\hfill
 	\begin{subfigure}[t]{.65 \linewidth}
 		\centering
 		\includegraphics[page=8,width=1.0\textwidth,trim=0 70 0 0,clip]{clauseGadget}
 	\end{subfigure}
 	
	\caption{Positive clause gadget
	in 8 configurations.}
 	\label{fig:clauseGadget}
\end{figure}

A \textbf{variable gadget} is depicted in \cref{fig:variableGadget}.
Its base structure is the {\pu (violet) frame} composed of chains of unit squares.
The core element is the {\re (red) central chain} of unit squares (with a few {\re side-arms}),
which has one degree of flexibility, namely, moving as a whole to the left or to the right without leaving the {\pu frame structure} of the gadget.
It looks and behaves a bit like a pipe cleaning brush that is stuck inside the {\pu frame} but can be moved a bit back and forth.
Hence, we call it a \emph{\re brush}.
It is connected via a \gshift to the {\re brush} of the previous variable gadget (see \cref{fig:variableGadget}a/d) and
the first {\re brush} is connected to the {\pu frame} via a \gshift
(see on the left side of \cref{fig:construction-3slopes}).
This allows only a horizontal shift of the {\re brushes}, but no vertical movement relative to its anchor point at the {\pu frame structure}.
Note that the horizontal position in any variable gadget is independent of those in all other gadgets. 
If the {\re brush} is positioned to the very left (right), the corresponding variable is set to \textsf{false} (\textsf{true}). 

For each occurrence of a variable in a positive clause, we have a construction as depicted in \cref{fig:variableGadget}b.
There, a long chain of {\gr (green) \gsquares} -- from now on called \emph{\gr bolt} -- is attached to the {\pu frame structure} via two~\gshifts,
which allow only a vertical, but no horizontal shift.
The {\gr bolt} has on its left side an {\gr arm}, which can only be placed in one of two {\pu pockets of the frame}.
It can always be placed in the {\pu upper pocket}, which pushes the {\gr bolt} outwards with respect to the variable gadget (into an edge gadget and a clause gadget).
It can only be placed in the {\pu lower pocket} if the {\re brush} is shifted to the very right (i.e. set to \textsf{true})~-- then the {\gr bolt} can ``fall'' into a cove of the {\re brush}.
For each occurrence of a variable in a negative clause, we have this construction
upside-down, such that the {\gr bolt} can be pulled into the variable gadget only if the {\re brush} is shifted to the very left (i.e. set to \textsf{false}).

Note that, to maintain outerplanarity of the whole construction, the {\pu frame structure} is not contiguous, but connected by \gshifts
and the {\gr arms} of the {\gr bolts}.
Hence, the {\pu frame structure} decomposes into many components that have fixed relative horizontal positions
and their unit squares have the same side lengths. 
However, the components can shift up and down relative to each other.
To keep this vertical shift small enough not to affect the correct functioning of our reduction,
we use, for each such component, the construction depicted in \cref{fig:variableGadget}c.
The chain of {\re brushes} has no vertical flexibility and serves as a base ground for an ``anchor'' of the {\pu frame}.
The {\pu frame} can move less than one unit up or down unless it violates planarity.
If the {\pu frame} would be shifted up enough to be completely above the {\re brush},
it would get in conflict with the adjacent {\gr bolt}. 
 
An \textbf{edge gadget} consists of only three straight chains~-- two {\pu frame segments} and a {\gr bolt} in the middle.
Their purpose is to synchronize the distance of the clause gadgets to the variable gadgets and to preserve the size of the unit squares.
Several edge gadgets are depicted on yellow background color in \cref{fig:construction-3slopes}.

A \textbf{clause gadget} for a positive clause is depicted in \cref{fig:clauseGadget}.
Within a {\pu frame}, which is connected at six points to the {\pu frames} of three edge gadgets, 
there is a horizontal {\og (orange) bar}, which is attached via two \gshifts to the {\pu frame}~-- 
one \gshift allows a horizontal, the other allows a vertical shift.
It resembles a crane that can move up and extend its arm, while it holds the horizontal {\og bar} on a vertical {\og (orange) rope}.
The three {\gr bolts} from the corresponding variable gadgets reach into the clause gadget.
The lengths of these {\gr bolts} is chosen such that, if they are pushed out of their variable gadget and into the clause gadget, they only slightly fit inside the gadget.
Depending on whether each of the {\gr bolts} is pushed into the clause gadget or pulled out of it,
we have eight possible configurations (with sufficiently small vertical slack).
They represent the eight possible truth assignments to a clause.
In \cref{fig:clauseGadget}, we illustrate that in each configuration, we can accommodate the horizontal {\og bar} in an upward planar 3-slope drawing of the clause gadget 
-- except for the case when all three {\gr bolts} push into the clause gadget, which represents the truth assignment \textsf{false} to all contained variables.

A negative clause gadget uses the same construction, but mirrored vertically.
There, three {\gr bolts} pushing into the clause gadget means the contained variables are all set to \textsf{true}.

Putting our gadgets together, we conclude:

\begin{theorem} \label{clm:nphard3slopesFixedEmb}
		Deciding whether an upward outerplane digraph admits an upward planar 3-slope drawing is NP-hard.
\end{theorem}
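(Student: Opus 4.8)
The plan is to establish NP-hardness by polynomial-time reduction from \pms, constructing for each formula $F$ (together with its rectangular incidence-graph drawing) the upward outerplane digraph $G_F$ sketched in \cref{fig:construction-3slopes}, and then proving the equivalence ``$F$ is satisfiable $\iff G_F$ admits an upward planar 3-slope drawing''. First I would verify that the construction is polynomial in size and that $G_F$ is indeed upward outerplanar with the prescribed embedding: each gadget is a bounded number of \gsquare-chains, \gshift-subgraphs, and triangles whose count is linear in the number of variables, clauses, and variable occurrences, and the whole arrangement follows the planar layout of the incidence graph, so outerplanarity is maintained by keeping the \pu frame non-contiguous and linking its pieces only through \gshifts and the \gr arms.

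The heart of the proof is the rigidity analysis, which I would organize around the two building blocks already characterized in the excerpt. Using the fact that \gsquare admits (up to scaling and diagonal mirroring) only the unit-square realization, I would argue that every chain of \gsquares in a gadget is forced into a rigid straight segment of unit squares of a common side length, so that in any valid drawing the entire \pu frame is a rigid orthogonal scaffold. Then, invoking \cref{clm:Gshift}, I would show that each \re brush has exactly one translational degree of freedom (a horizontal slide, by the lemma's parallel-equal-length edges $e_1,e_2$) and no vertical freedom, so its position encodes a Boolean value: fully left $=$ \textsf{false}, fully right $=$ \textsf{true}. Symmetrically, each \gr bolt attached through two \gshifts has a single vertical degree of freedom, and its \gr arm must sit in either the upper or lower \pu pocket; the geometry of the cove in the \re brush forces the lower pocket (bolt pulled in) to be reachable only when the brush is in the corresponding extreme position, thereby coupling each literal occurrence to the truth value of its variable exactly as in a positive (or, mirrored, negative) clause.

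With the gadget semantics pinned down, I would complete the equivalence by analyzing the clause gadget. The key claim, read off from the eight configurations in \cref{fig:clauseGadget}, is that the horizontal \og bar (hung on its \og rope and attached by one horizontal and one vertical \gshift) can be accommodated in an upward planar 3-slope drawing precisely when at least one of the three \gr bolts is pulled out of the clause gadget, and it cannot be placed when all three bolts push in. Translating: a positive clause gadget is drawable iff at least one of its literals is \textsf{true}, and the mirrored negative clause gadget iff at least one negated literal is \textsf{true}. Hence a global valid 3-slope drawing exists iff there is a truth assignment satisfying every clause, i.e.\ iff $F$ is satisfiable. For the forward direction I would exhibit, from a satisfying assignment, concrete brush and bolt positions and the corresponding bar placement in each clause; for the backward direction I would read the variable values off the brush positions, which are well defined and consistent across occurrences because the frame is rigid and the \gshifts forbid vertical drift beyond the sub-unit slack guaranteed by the \cref{fig:variableGadget}c anchor construction.

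The main obstacle I expect is controlling the small vertical slack: because outerplanarity forces the \pu frame to be broken into components joined only through \gshifts and \gr arms, these components can float up or down by a sub-unit amount relative to one another, and I must certify that this accumulated play never lets a \gr bolt fall into (or escape from) a \re cove against the intended truth value, nor let the \og bar fit when it should not. I would handle this exactly as the construction suggests, by using the anchor of \cref{fig:variableGadget}c to bound each component's vertical movement to strictly less than one unit (any larger shift violates planarity against an adjacent \gr bolt), and by choosing the \gr bolt lengths so that a pushed-in bolt only barely fits, so the remaining tolerance is smaller than the forced unit-square spacing. Making these ``only slightly fits'' and ``less than one unit'' statements quantitatively tight, and checking that the tolerances compose correctly across a whole clause gadget with three incoming bolts, is the delicate part of the argument; the rest follows from the rigidity of \gsquare chains and the one-degree-of-freedom behavior of \gshift established in \cref{clm:Gshift}.
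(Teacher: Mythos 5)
Your proposal is correct and follows essentially the same route as the paper's proof: a reduction from \pms via the gadget construction of \cref{fig:construction-3slopes}, rigidity of unit-square chains plus \cref{clm:Gshift} to pin down the structure of any valid drawing, reading the truth assignment off the brush positions, and bounding the accumulated vertical slack (via the anchor of \cref{fig:variableGadget}c and the tight bolt lengths) so that the all-\textsf{false} clause configuration is the unique undrawable one. The quantitative step you flag as delicate is resolved in the paper by comparing the at most $2-\varepsilon$ plus $1-\varepsilon$ of available vertical play against the shift of more than $3$ that the \og bar would need, which matches your outline.
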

\begin{proof}
	To show NP-hardness, we use the reduction from \pms as described above.
	Let $F$ be a given \pms formula with a rectangular drawing of the incidence graph of~$F$ and $G_F$ be the digraph obtained by our reduction.
	Each gadget in~$G_F$ has only polynomial size and we can construct it in polynomial time.
	
	If $F$ is satisfiable, then there is a satisfying assignment~$T^\star$ of truth values to the variables of $F$.
	Draw $G_F$ as illustrated in \cref{fig:construction-3slopes}, \cref{fig:variableGadget}, and \cref{fig:clauseGadget} 
	such that the {\re brush} in a variable gadget has a distance of~$\varepsilon$ (for some sufficiently small $\varepsilon > 0$)
	to the {\pu frame} on its left if the corresponding variable is set to \textsf{false} in~$T^\star$
	and has a distance of~$\varepsilon$ to the {\pu frame} on its right if the corresponding variable is set to \textsf{true}.
	This drawing uses only three slopes and is upward planar.
	Observe that it is also outerplanar.
	On a chain of unit squares, the vertices on both sides are incident to the outer face.
	The drawings of the variable gadgets are open from one to the other
	and the drawing of the rightmost variable gadget is open to the outside.
	Moreover, the drawings of the positive (negative) clause gadgets are open on their top (bottom) right and on their whole bottom (top) sides, 
	which also gives the ``covered'' clause gadgets and the outer sides of the variable gadgets access to the outer face.
	
	On the other hand, if there is an upward planar 3-slope drawing of $G_F$ and it resembles the structure in~\cref{fig:construction-3slopes}, 
	we can read a satisfying truth assignment depending on the positions of the {\re brushes} 
	(for the ones in intermediate position where no {\gr bolt} can use the {\pu lower pocket}, we can use any assignment for this variable).
	
	In the remainder of this proof, we argue that any drawing resembles this structure.
	Consider the \gsquare of the {\pu frame structure} which is connected via a \gshift to the {\re brush} of the first variable gadget.
	Clearly, it is drawn as a unit square~-- it is our reference unit square having side length~1.
	Observe that all other unit squares are connected via chains of \gsquares or~ \gshifts to this reference unit square.
	By our observation and \cref{clm:Gshift}, they have the same side length.
	When ignoring the central parallel edges in the \gshifts, the drawing decomposes to few rigid components.
	The first rigid {\pu frame component} contains the reference unit square.
	The central edges of the incident \gshift connecting it to the first {\re brush} can have only relatively small length 
	as they would hit the ``back wall'' of, again, the first rigid {\pu frame component} otherwise.
	Hence, the {\re brush} is indeed drawn inside the {\pu frame} of the first variable gadget.
	Moreover, the first rigid {\pu frame component} is connected via a \gshift to a {\gr bolt}, 
	which in turn is connected to the next rigid {\pu frame components} of the first variable gadget (see \cref{fig:variableGadget}b).
	Since this {\gr bolt} cannot escape the upper boundary of the clause gadget, which is also part of the first rigid {\pu frame component},
	the {\gr arm} is in one of the two {\pu pockets}.
	Then, the start point of the next rigid {\pu frame structure} cannot be above the {\re arm} of the {\re brush} in \cref{fig:variableGadget}c 
	and the construction depicted there keeps the vertical slack of this rigid {\pu frame component} within the range $(-1, 1)$.
	This argument inductively propagates for all following rigid components.
	Hence, we have vertical slack of less than 1 for the {\pu frame structures} and
	for the {\gr arms} in the {\pu pockets}.
	In the configuration \textsf{false}-\textsf{false}-\textsf{false} of the clause gadget,
	this would give us $2 - \varepsilon$ vertical gain if two {\pu frames} move away from each other and $1 - \varepsilon$ gain if the {\gr arm} is close to the bottom of its {\pu pocket}.
	However, we would require a shift of the horizontal {\og bar} of more than 3 to be below or above the bulge of the left or the right {\gr bolt}.
	Hence, this configuration is not drawable and all other positions of the horizontal {\og bar} correspond to one of the 7 other, satisfying configurations of \cref{fig:clauseGadget}.
\end{proof}

\subsection{Outerplanar Digraphs and \texorpdfstring{$k = 3$}{k = 3}}
Note that in $G_F$, we have used only connected \gsquares and \gshifts.
By our observation on chains of unit squares and by \cref{clm:Gshift},
the planar embedding of $G_F$ is unique up to mirroring along a diagonal axis.
Therefore, our reduction holds true also for the variable embedding scenario and we derive the following corollary from \cref{clm:nphard3slopesFixedEmb}.

\begin{corollary} \label[corollary]{clm:nphard3slopesVariableEmb}
	Deciding whether an upward outerplanar digraph admits an upward planar 3-slope drawing is NP-hard
	in the variable embedding scenario.
\end{corollary}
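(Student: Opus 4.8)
The plan is to reuse the reduction from \pms and the digraph~$G_F$ from the proof of \cref{clm:nphard3slopesFixedEmb} verbatim, and to upgrade it to the variable embedding scenario by proving that $G_F$ admits an essentially unique upward planar embedding. The forward direction needs nothing new: if $F$ is satisfiable, the drawing built in \cref{clm:nphard3slopesFixedEmb} is an upward planar 3-slope drawing, which is equally a witness when no embedding is prescribed. So the entire content lies in the converse, where I may no longer assume that a given drawing of~$G_F$ uses the intended embedding.

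First I would establish the embedding uniqueness. Since $G_F$ is assembled exclusively from connected copies of \gsquare and \gshift, I would invoke two rigidity facts from the excerpt: the observation that every connected chain of \gsquares (together with attached triangles) has an upward planar 3-slope drawing that is unique up to scaling and mirroring along a diagonal axis, and \cref{clm:Gshift}, which pins down each \gshift up to the same diagonal mirror (leaving only the shared length~$\ell$ of $e_1,e_2$ free). I would then argue that these local mirror choices cannot be made independently: whenever two building blocks share a cut vertex, the consistent 3-slope assignment around that vertex forces the same mirror on both blocks, as a mismatch would clash with the cyclic slope pattern there. Because $G_F$ is connected, the mirror choice propagates to a single global one, so every upward planar 3-slope drawing of~$G_F$ realizes the intended embedding up to one global diagonal reflection.

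With the embedding fixed up to this global symmetry, the converse follows. A global diagonal reflection maps upward planar 3-slope drawings to upward planar 3-slope drawings (it fixes \Sdiag and swaps \Svert and \Shori, both of which point upward in the rotated frame) and maps the intended layout of \cref{fig:construction-3slopes} to its mirror image, consistently swapping left/right and top/bottom throughout all variable, clause, and edge gadgets; in particular it preserves the correspondence between {\re brush} positions and truth values. Hence the structural analysis at the end of the proof of \cref{clm:nphard3slopesFixedEmb}---that any upward planar 3-slope drawing of~$G_F$ forces each clause gadget out of its \textsf{false}-\textsf{false}-\textsf{false} configuration and therefore encodes a satisfying assignment of~$F$---applies unchanged, merely read off the reflected drawing. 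Together with the fact that $G_F$ is upward outerplanar and constructible in polynomial time, this yields NP-hardness in the variable embedding scenario.

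The hard part will be the propagation step: verifying that the ``up to diagonal mirror'' freedom granted locally by \cref{clm:Gshift} and by the chain observation really collapses to a single global choice, rather than permitting independent flips of different rigid components. I would handle this by traversing the shared cut vertices one by one along a spanning structure of the gadget graph and checking at each that the slope constraints inherited from the already-fixed side leave no consistent alternative orientation for the adjacent block.
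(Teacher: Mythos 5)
Your proposal follows essentially the same route as the paper: the paper likewise derives the corollary by observing that $G_F$ is built only from connected \gsquares and \gshifts, so that by the rigidity of unit-square chains and \cref{clm:Gshift} its upward planar embedding is unique up to a single diagonal mirror, after which the fixed-embedding argument of \cref{clm:nphard3slopesFixedEmb} applies unchanged. Your additional care about propagating the mirror choice across shared cut vertices and about reading the truth assignment off the reflected drawing only makes explicit what the paper leaves implicit.
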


\subsection{Plane Digraphs and \texorpdfstring{$k \geq 4$}{k equal or greater than 4}}
Next, we describe how to extend our NP-hardness reduction to more than $3$ slopes.
There, however, we use only upward planar instead of upward outerplanar digraphs.
Observe that, if we fix the embedding and give up outerplanarity, 
we can add dummy leaves to each vertex to occupy all but the originally used slopes.
Since any 3 slopes can be projected to $\{\Svert, \Sdiag, \Shori\}$
and we block all other slopes, our arguments work for all sets of $k$ slopes and the reduction remains correct.
The digraph remains upward planar, but it is not upward outerplanar any more.
We formalize this statement in the following corallary
derived from \cref{clm:nphard3slopesFixedEmb}.

\begin{corollary} \label{clm:nphardkslopesFixedEmb}
	Deciding whether an upward plane digraph
	(i.e., a bimodal embedding is given)
	with maximum in- and outdegree~$k$ admits an upward planar drawing with $k$ slopes is NP-hard for $k \ge 3$.
	This holds true for all choices of $k$ slopes.
\end{corollary}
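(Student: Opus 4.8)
The plan is to reduce directly from \cref{clm:nphard3slopesFixedEmb}, which already gives NP-hardness for the case $k = 3$ with a fixed (upward outerplane) embedding. So the target corollary follows immediately for $k = 3$, and the only real work is to lift the reduction from $3$ slopes to an arbitrary constant $k \ge 4$. First I would take the digraph $G_F$ produced by the reduction of \cref{clm:nphard3slopesFixedEmb} together with its fixed upward outerplanar embedding, in which every edge uses one of the three slopes $\set{\Svert, \Sdiag, \Shori}$. By the affine-transformation argument from \cref{sec:prelim} (\cref{fig:affinelyTransform3Slopes}), any three of the $k$ prescribed slopes can be mapped to this canonical triple, so without loss of generality I may assume $G_F$ is drawn with three of the $k$ available slopes and I must \emph{forbid} the use of the remaining $k - 3$ slopes.

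The key idea is gadget padding: at each vertex $v$ of $G_F$, I would attach dummy leaves (degree-$1$ vertices) so as to occupy every slope that is not already used by an original edge of $G_F$ at $v$. Concretely, for each of the $k$ incoming directions and $k$ outgoing directions at $v$ that are not consumed by an edge of $G_F$, I add a pendant edge (an incoming dummy predecessor or an outgoing dummy successor) forced onto that slope by the consistency requirement of the fixed embedding. Since the embedding is fixed and the cyclic order around $v$ is bimodal, a consistent $k$-slope assignment must place these pendant edges on distinct slopes in counterclockwise order; after the padding, the only slopes left available for the original edges of $G_F$ at each vertex are exactly the three canonical ones. Thus every $k$-slope drawing of the padded graph $G_F'$ restricts to a $3$-slope drawing of $G_F$ on the canonical triple, and conversely any valid $3$-slope drawing of $G_F$ extends to a $k$-slope drawing of $G_F'$ by drawing the short pendant leaves on their forced slopes with small enough length to avoid crossings. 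Hence $G_F'$ admits an upward planar $k$-slope drawing if and only if $G_F$ admits an upward planar $3$-slope drawing if and only if $F$ is satisfiable.

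The construction clearly has polynomial size: each vertex gains at most $2k - 3$ pendant leaves, so $G_F'$ has $\Oh(k \cdot \abs{V(G_F)})$ vertices, and for constant $k$ this is linear in the size of $G_F$, which is itself polynomial in $\abs{F}$. The embedding of $G_F'$ is obtained from the fixed embedding of $G_F$ by inserting each dummy leaf into the appropriate gap in the rotation system, so an upward plane embedding is produced in polynomial time. Because the argument pins down which slopes the dummy leaves occupy for \emph{any} choice of $k$ slopes (the affine normalization handles the three ``real'' slopes, and the padding blocks all others irrespective of their exact values), the reduction works uniformly for every prescribed $k$-slope set, giving the final clause of the corollary.

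The main obstacle, and the reason outerplanarity is necessarily sacrificed here, is planarity of the padding itself: attaching pendant leaves to every vertex forces some of them into the interior of faces of $G_F$, so the dummies cannot all lie on the outer face and $G_F'$ is in general only upward \emph{planar}, not upward outerplanar. I would therefore need to verify that each forced pendant edge can be drawn with sufficiently small length so that it stays within a small disk around its endpoint $v$ and does not cross any edge of $G_F$ or any other dummy; this is routine once the original $3$-slope drawing is fixed, since the distances between non-incident features of $G_F$ are bounded below, but it is the step where I must be careful that the bimodality and the forced slope of each dummy are jointly satisfiable at every vertex of every gadget.
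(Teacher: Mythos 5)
Your proposal matches the paper's argument: the paper likewise derives this corollary from \cref{clm:nphard3slopesFixedEmb} by adding dummy leaves at each vertex to occupy all but the three slopes of the base construction (exploiting the fixed bimodal embedding to force each pendant edge onto a prescribed slope), accepting the loss of outerplanarity, and invoking the affine normalization of any three slopes to $\set{\Svert, \Sdiag, \Shori}$ to cover all choices of $k$ slopes. Your write-up is in fact more detailed than the paper's one-paragraph justification, and the extra care you flag about drawing the pendants short enough is a reasonable (routine) point the paper leaves implicit.
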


\subsection{Planar Digraphs and \texorpdfstring{$k \geq 5$}{k equal or greater than 5}} 
Last, we consider the variable embedding scenario.
We remark that, given a digraph, finding an upward planar embedding
is already NP-hard~\cite{GT01} for all $k \ge 2$~\cite{KM21}.
This NP-hardness immediately propagates to our problem.
However, we can show that the problem remains NP-hard
even if we can find an embedding (or multiple different embeddings) efficiently.
Hence, NP-hardness additionally comes from finding a concrete drawing
when we are allowed to change any upward planar embedding arbitrarily%
\footnote{We may assume, we obtain an upward planar embedding
	to a given digraph by an oracle.}.

We show that our NP-hardness reduction remains applicable for $k \ge 5$
by extending our digraph~$G_F$.
We prove that for any upward planar embedding of~$G_F$,
deciding whether $G_F$ admits a $k$-slope drawing remains NP-hard.
This leaves $k = 4$ in the variable embedding scenario as the only open case.
More precisely, we extend $G_F$ such that it has
a unique planar embedding up to mirroring along a diagonal axis
and up to swapping subgraphs only used to occupy the slopes we do not
use for our original NP-hardness reduction with three slopes.

Assume for now that $k$ is an odd number;
thereafter, we consider the case that $k$ is even.
From the given $k$ slopes, we pick the 3 middle slopes to host the digraph of the hardness construction described before.
For simplicity, we visualize these 3 middle slopes again as $\{\Svert, \Sdiag, \Shori\}$ 
and the other slopes in quadrants II and IV around a vertex.
The key idea is to occupy the unused slopes at each vertex by
\emph{fans} and \emph{beaters} as depicted in \cref{fig:fan,fig:beater} instead of simple leaves.
Fans are appended to the outside of each vertex $v$ if the angle $\alpha$
that has been formed in the old construction is at least $180^{\circ}$.
More precisely, for each unoccupied slope spanned by $\alpha$, we add a neighbor to $v$
and then connect each consecutive pair of these new neighbors of $v$;
edge directions are set appropriately.
For each other remaining slope $s$ at each vertex~$v$, we add a beater.
This is a digraph obtained from the wheel graph $W_{2k+1}$ as follows.
The \emph{wheel graph} $W_{2k+1}$ is the cycle $C_{2k} = (v_1, \dots, v_{2k})$
with an additional vertex $c$ that is adjacent to
all other vertices $\set{v_1, \dots, v_{2k}}$.
The edges of $W_{2k+1}$ on the cycle of the wheel are directed from
a local source $v_{\lfloor k/2 \rfloor}$ to a local sink~$v_{k + \lfloor k/2 \rfloor}$,
the vertices $v_1, \dots, v_k$ have outgoing edges towards~$c$, 
and 
the vertices $v_{k+1}, \dots, v_{2k}$ have incoming edges from~$c$.
Furthermore, one spoke~$e^\star = (c, v_i)$ (for a particular $i \in \set{1, \dots, 2k}$)
is broken free and attached to~$v$,
that is, we remove the edges $v_{i-1}v_i$ and $v_iv_{i+1}$ (indices modulo $2k$)
and we identify $v_i$ as $v$; see \cref{fig:beater}.
This construction enforces an order on the spokes and, hence, we choose $i$ such that we can prescribe the slope $s$ of $e^\star$.  
Note that the whole beater could be mirrored leaving two possible slopes for $e^\star$. 
However, this is unproblematic since in our construction the ``mirrored'' slope of $s$
is also occupied by a beater or a fan.
For an illustration how to append fans and beater see \cref{fig:gadgetsKgreater3odd}.

Next, we prove that this suffices to enforce a desired embedding
and we describe how to extend the construction when $k$ is an even number.
Though we lost upward outerplanarity, 
note that the underlying undirected graph remains outerplanar for odd $k$.

\begin{figure}[t]
	\centering
	\begin{subfigure}[t]{0.45 \linewidth}
		\centering
		\includegraphics[trim=70 232 180 18,clip]{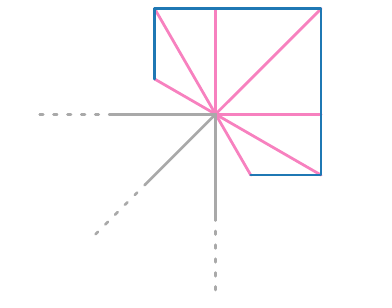}
		\caption{A fan.}
		\label{fig:fan}
	\end{subfigure}
	\hfill
	\begin{subfigure}[t]{0.45 \linewidth}
		\centering
		\includegraphics[trim=200 220 10 0,clip]{gadgetsKgreater3}
		\caption{A beater.}
		\label{fig:beater}
	\end{subfigure}

	\bigskip

	\begin{subfigure}[t]{1.0 \linewidth}
		\centering
		\includegraphics[trim=0 23 80 142,clip]{gadgetsKgreater3}
		\caption{We add fans and beaters to each vertex of the digraph such that all unused slopes are occupied.}
		\label{fig:gadgetsKgreater3odd}
	\end{subfigure}
	\caption{Example for $k = 5$ illustrating how to occupy unused slopes
		such that our NP-hardness reduction with three slopes remains applicable.
	}
	\label{fig:gadgetsKgreater3}
\end{figure}

\begin{theorem} \label{clm:nphardkslopesVariableEmb}
Deciding whether an upward planar digraph with maximum in- and outdegree~$k$ admits an upward planar drawing with $k$ slopes is NP-hard
for $k \in \mathbb{N}^+ \setminus \{1, 2, 4\}$
in the variable embedding scenario
even if we can find an upward planar embedding efficiently.
This holds true for all choices of $k$ slopes.
\end{theorem}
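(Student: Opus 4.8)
The case $k=3$ is exactly \cref{clm:nphard3slopesVariableEmb}, so assume $k \ge 5$. The plan is to show that the augmented digraph $G_F$ (equipped with the fans and beaters described above) admits an upward planar $k$-slope drawing if and only if $F$ is satisfiable, and, crucially, that this equivalence holds for \emph{every} upward planar embedding of the augmented graph. Since \cref{clm:nphard3slopesFixedEmb} already reduces \pms to the $3$-slope problem on the core $G_F$, it suffices to argue that the fans and beaters neither destroy upward planarity nor add flexibility beyond that of the original $3$-slope construction, while forcing all non-core edges onto the $k-3$ slopes outside the three middle ones.

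First I would establish the \textbf{rigidity of a beater}. The central vertex $c$ of the underlying wheel $W_{2k+1}$ has indegree $k$ and outdegree $k$, so in any upward planar $k$-slope drawing its $k$ outgoing spokes must realize all $k$ upward slope-directions and its $k$ incoming spokes all $k$ downward slope-directions (two edges incident to $c$ on the same slope would overlap). This pins the cyclic order of the spokes around $c$ completely, and the rim cycle $C_{2k}$ then fixes the relative positions of all rim vertices up to scaling. Consequently the free spoke $e^\star$ can take only the single slope and direction determined by its index $i$, up to the global mirror of the beater along a diagonal axis. Choosing $i$ lets us prescribe any target slope $s$ for $e^\star$, and the mirror ambiguity is harmless because the mirror-image slope of $s$ is itself occupied by another beater or a fan at the same vertex. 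For a \textbf{fan} filling a reflex angle $\alpha \ge 180^\circ$ at an outer vertex $v$, the analogous and simpler argument shows that the appended path of new neighbours is forced to occupy exactly the unused slopes spanned by $\alpha$, each in a unique way.

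Next I would assemble these pieces. Attaching, to every vertex of $G_F$, a fan on each reflex side and a beater for each otherwise-unoccupied slope blocks all slopes except the three middle ones at every vertex. I would verify that the result stays upward planar (each gadget is attached at a single vertex and drawn inside a small disk around it, disjoint from the core and from the other gadgets) and has polynomial size, so the reduction runs in polynomial time. The key equivalence is then that every core edge of $G_F$ is forced onto one of the three middle slopes: an upward planar $k$-slope drawing of the augmented graph restricts to an upward planar $3$-slope drawing of $G_F$ on those slopes, and conversely any such $3$-slope drawing extends by drawing the rigid gadgets. By \cref{clm:nphard3slopesFixedEmb} this happens exactly when $F$ is satisfiable.

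To obtain the stronger \emph{variable-embedding} statement I would argue that the augmentation makes the embedding essentially unique: each beater and fan is rigid by the argument above, and each \gsquare/\gshift chain of the core is rigid by \cref{clm:Gshift}, so the only embedding freedoms are the global diagonal mirror and the independent swapping of beater/fan subgraphs occupying a pair of mirror-symmetric slopes. Neither freedom alters the combinatorial structure of the $3$-slope core, so the equivalence holds for \emph{every} upward planar embedding; since we can also exhibit one directly, this yields NP-hardness even when an upward planar embedding may be found efficiently. Finally I would treat the even-$k$ case, where there is no single central slope: I would host the core on three consecutive middle slopes and place the remaining beaters and fans around them, checking that the same rigidity and slope-blocking arguments carry over (at the cost of outerplanarity of the underlying graph). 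I expect this even-$k$ bookkeeping, together with the verification that beaters genuinely force \emph{all} $2k$ directions at $c$ and therefore cannot collapse or introduce extra freedom where many gadgets meet a shared vertex, to be the main obstacle; the $3$-slope core itself is already settled by \cref{clm:nphard3slopesFixedEmb}.
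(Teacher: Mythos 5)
Your treatment of odd $k\ge 5$ matches the paper's: the beater's hub has in- and outdegree $k$, so its spokes are forced onto all $2k$ slope directions, fixing the gadget up to a global mirror; the mirror ambiguity of the free spoke $e^\star$ (slope $\ell$ versus slope $k-\ell+1$) is neutralized because the middle slope is self-paired when $k$ is odd and every other candidate slope is already occupied by another beater or a fan. That pairing argument is exactly the paper's, and it is sound.

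The genuine gap is your even-$k$ case. You propose to ``check that the same rigidity and slope-blocking arguments carry over,'' but they do not: when $k$ is even there is no self-paired central slope, and choosing three consecutive middle slopes for the core leaves two blocks of unused slopes of \emph{different} cardinalities on either side. The pairing argument then fails --- a beater could be mirrored or moved to the other side of its vertex, block one of the three core slopes, and free up an outer slope for a core edge, destroying the reduction. The paper repairs this with an additional construction you are missing: at each vertex it connects all beaters that must share a face by thick dummy edges into a bundle (or connects a lone beater by a path of length~2 to an adjacent fan). Each bundle contains both an incoming and an outgoing spoke at $v$, so it can only sit at one of the two turning points of edge directions around $v$, and a counting argument on in-/outdegree $k$ rules out swapping two bundles. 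Note that if your ``same arguments carry over'' plan worked, it would work equally for $k=4$, which the theorem explicitly excludes; the reason $k=4$ fails is precisely that only one non-core slope remains, so beaters cannot be bundled and the face of a single beater cannot be pinned down. Your proposal does not account for this distinction, so the even case as written would not go through.
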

\begin{proof}
	For $k = 3$, we use \cref{clm:nphard3slopesVariableEmb}.
	Now assume that $k \ge 5$ is an odd number.
	We will consider the case where $k$ is even afterwards.
	We show next that the extended digraph can only be embedded in the
	previously described way up to mirroring along a diagonal axis
	and up to swapping beaters at a vertex.
	
	Observe that the {\pk (pink) edges} of each fan occupy neighboring slopes because their other endpoints are connected by {\bl (blue) edges}.
	For every fan, the slope set it covers contains both incoming and outgoing edges.
	Therefore, there are at most two ways to add a fan to a vertex~--
	either to its left or to its right side.
	In any case, it needs to be added to the outer face since otherwise
	neither a \gsquare nor a \gshift can be drawn (any fan blocks $\ge k -1$ slopes, which corresponds to $\ge 180^{\circ}$).
	
	The inner {\gr (green) edges} of a beater occupy all but one slope, which remains for the
	{\og (orange) edge}~$e^\star$ connecting the beater to its vertex.
	We next analyze the slope of~$e^\star$.
	Because of the outer {\bl edges} of a beater,
	there is an order of the slopes of the inner {\gr edges}
	(up to mirroring the whole beater)
	which determines the assignment of slopes within the beater.
	Hence, {\og edge} $e^\star$ uses one of two possible slopes~--
	say the $\ell$-th or the $(k - \ell + 1)$-th slope.
	Since $k$ is an odd number, the middle slope $\ell = \lceil k / 2 \rceil$ is unique, which fixes the position of some beaters
	(e.g., in \cref{fig:gadgetsKgreater3odd}, the beater in the outer face on the right side).
	For the other beaters, observe that we use the $(\lceil k / 2 \rceil - 1)$-th
	and the $(\lceil k / 2 \rceil + 1)$-th slope never for a beater, but always
	for the edges of the original construction or fans.
	For all other values of $\ell$, the opposite $(k - \ell + 1)$-th slope is
	always occupied by another beater or a~fan.
	
	This means, the embedding of the subgraph used for the reduction is the same as in the digraph~$G_F$ from the reduction of
	\cref{clm:nphard3slopesFixedEmb} up to mirroring,
	and the embedding of the whole digraph is unique up to mirroring and exchanging the positions of pairs of beaters.
	
	\begin{figure}[t]
		\centering
		\includegraphics[page=3, trim={79, 144, 0, 12} ,clip]{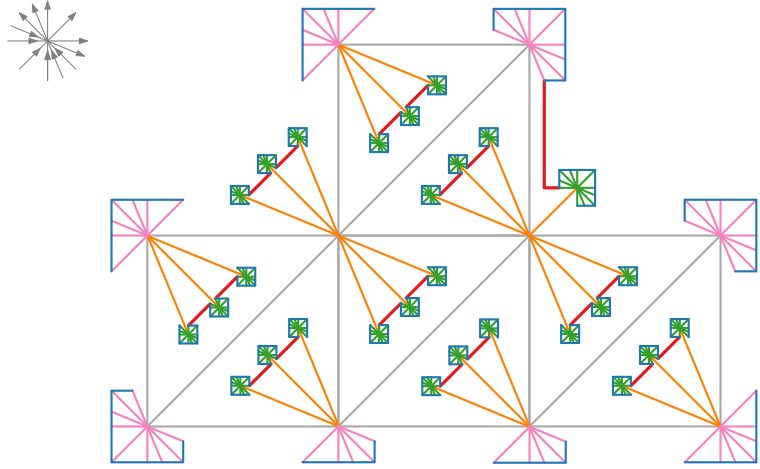}
		\caption{Example for $k = 6$ illustrating how to occupy unused slopes
			such that our NP-hardness reduction with three slopes remains applicable.
			We add {\re (red) dummy edges} to connect the beaters to each other or an adjacent fan, 
			which fixes the faces, in which the beaters end up,
			in the case $k$ is even and greater than 4.
		}
		\label{fig:gadgetsKgreater3even}
	\end{figure}
	
	It remains to consider the case when $k \ge 5$ is even.
	Our goal is to use for the digraph from \cref{clm:nphard3slopesFixedEmb} always the same 3 middle slopes, e.g., the $(k/2 - 1)$-th, $k / 2$-th, and $(k/2 + 1)$-th slope
	as in \cref{fig:gadgetsKgreater3even}.
	This is automatically the case if all fans and beaters are drawn in the desired face.
	We next describe how we can enforce in which face a beater or fan is drawn.
	At each vertex, we connect all incident beaters that shall share a common face
	by {\re (thick red) dummy edges} as depicted in \cref{fig:gadgetsKgreater3even}.
	If there is only one beater at a vertex at a face,
	we connect it by a {\re path} of length~2 to a neighboring fan; see the one on the top right of \cref{fig:gadgetsKgreater3even}.
	This latter case concerns only beaters on the outer face that block the slope \Sdiag.
	(Inside a \gshift, we do not need beaters in between $e_1$ and $e_2$.)
	In every other case, observe that within each bundle of connected beaters being adjacent to a vertex~$v$,
	there is at least one beater being connected to $v$ by an outgoing {\og edge},
	and there is at least one beater being connected to $v$ by an incoming {\og edge}.
	Hence, this bundle must be placed in the face being incident to $v$ where
	the direction of {\gy (gray) bounding edges} switches.
	This is either the opposite side of a fan or the opposite side of
	another bundle of beaters.
	In the latter case, the bundle of beaters cannot be exchanged since this would leave an incoming and an outgoing slope unused,
	which is not possible if $v$ has in- and outdegree $k$.
\end{proof}

\subsection{Planar Digraphs and \texorpdfstring{$k = 4$}{k = 4}}
\label{sec:planark4}
We have shown that it is NP-hard to decide whether
a given digraph admits an upward planar drawing with $k$ slopes
for all $k \ge 3$ in the fixed and the variable embedding scenario
except for $k = 4$ in the variable embedding scenario.
The reason that we could not show NP-hardness for this specific setting
is due to our extended hardness construction using beaters and fans.
We use beaters and fans to block all but three designated slopes
for the base hardness construction on three slopes.
However, beaters can be drawn in two different ways
(by mirroring) and hence potentially block two different slopes.
The slopes blocked by the same beaters can be grouped into pairs.
If we have an odd number of slopes, the central slope is
not part of such a pair (a mirrored beater still blocks the central slope).
We thus use the central slope and a pair of slopes 
for our three slopes in the base construction.

If $k$ is even, we connect beaters in order to block a sequence of slopes at a vertex~$v$~--
these are slopes for both incoming and outgoing edges incident to~$v$.
We exploit the property that at~$v$,
the incoming and outgoing edges form a
contiguous sequence in any upward embedding
and, hence, there are two turning points of edge directions at~$v$.
Choosing three neighboring slopes from an even number of available slopes
leaves two sets of unused neighboring slopes (then blocked by beaters)
whose cardinalities are off by one.
Hence, at a vertex with degree~$2k$, the number of incident incoming
and outgoing edges belonging to a group of connected beaters 
differs for the left and the right turning point.
For $k = 4$ we use all but one slope for the base construction.
This means, we cannot connect beaters and, thus, cannot cover
these turning points at the vertices.
Hence, a beater cannot reliably block the fourth slope 
but may use one of the three slopes of the base construction
and an edge of the base construction may then use the fourth slope.

\section{Concluding Remarks}
\label{sec:conclusion}
We have investigated the problem of
whether a digraph admits an upward planar $k$-slope drawing
for a fixed set of $k$ slopes.
Roughly speaking, the boundary between polynomial-time solvability
(for constant $k$)
and NP-hardness lies somewhere between cacti and planar digraphs.
In our analysis, we have somewhat sidestepped the issue of representing the coordinates of our drawings.
This is also the reason why we only showed NP-hardness,
but not NP-completeness for our problems.
Like some other geometric problems,
there is a chance that our problem is also $\exists \mathbb{R}$-complete.
(Remember that determining the (upward) planar slope number
is $\exists \mathbb{R}$-hard in general~\cite{Hof17,Qua21}.)

There remain several more open questions.
Let us start with the question marks in \cref{tab:overview}.
For planar digraphs, the only open case is $k = 4$
in the variable embedding scenario.
It would be good to close this gap~--
in \cref{sec:planark4}, we discuss why our current hardness construction fails.
For outerplanar digraphs and $k > 3$,
it is unclear whether deciding if there is
an upward outerplanar $k$-slope drawing is also NP-hard.
Supposed these cases were also NP-hard,
where does it become polynomial-time solvable?
Containment in P is not even clear for cacti and $k \in \omega(1)$,
where we could only give an FPT algorithm.
One may try to find an optimization to our dynamic program
to get rid of the dependence on $k$
in the exponent in the running time of the algorithm.
Furthermore, one may extend the algorithm to the regular grid setting
or to an arbitrary set of $k$ slopes.

In general, area consumption and area requirement is a question
worth investigating for many of our cases.
In particular, it remains open whether
an upward planar $k$-slope drawing of an unordered directed tree with maximum in- and outdegree $k$ 
requires sometimes exponential area.

The \emph{segment number} of a graph $G$ is the smallest number of line segments needed 
for a planar straight-line drawing of $G$~\cite{DESW07}; it serves as another quality measure of graph drawings,
which also indicates the use few geometric primitives.
As far as we know, the segment number of upward planar drawings has not been studied yet.
So overall, we hope to see more research on upward planar drawings that use few slopes
or that use few segments.

\pdfbookmark[1]{References}{References} 
\bibliographystyle{abbrvurl}
\bibliography{sources}

\end{document}